\documentclass[aos,preprint]{imsart}

\usepackage{macros}

\usepackage{graphicx}

\usepackage{hyperref}
\usepackage{cleveref}
\crefname{theorem}{theorem}{theorems}
\Crefname{theorem}{Theorem}{Theorems}

\crefname{proposition}{proposition}{propositions}
\Crefname{proposition}{Proposition}{Propositions}

\crefname{lemma}{lemma}{lemmas}
\Crefname{lemma}{Lemma}{Lemmas}

\crefname{corollary}{corollary}{corollaries}
\Crefname{corollary}{Corollary}{Corollaries}

\crefname{assumption}{assumption}{assumptions}
\Crefname{assumption}{Assumption}{Assumptions}

\crefname{definition}{definition}{definitions}
\Crefname{definition}{Definition}{Definitions}

\crefname{remark}{remark}{remarks}
\Crefname{remark}{Remark}{Remarks}

\begin{document}

\begin{frontmatter}
\title{Duality and Error for Predictively Oriented Inference}
\runtitle{Duality and Error for Predictively Oriented Inference}

\begin{aug}
\author[A]{\fnms{Aurya}~\snm{Javeed}\ead[label=e1]{asjavee@sandia.gov}}
\author[A]{\fnms{Drew P.}~\snm{Kouri}\ead[label=e2]{dpkouri@sandia.gov}}
\author[A]{\fnms{Teresa}~\snm{Portone}\ead[label=e3]{tporton@sandia.gov}}
\author[A]{\fnms{Rebekah}~\snm{White}\ead[label=e4]{rebwhit@sandia.gov}}

\address[A]{Optimization and Uncertainty Quantification,
Sandia National Laboratories,
Albuquerque, NM, United States of America
\printead[presep={,\ }]{e1,e2,e3,e4}}
\end{aug}

\begin{abstract}

Predictively oriented (PrO) inference quantifies uncertainty by selecting a distribution over model parameters to optimize a scoring rule applied to the induced predictive distribution, together with a divergence penalty from a reference distribution.
By applying the scoring rule after averaging model densities, PrO inference targets predictive performance, accounting for model misspecification.
We focus on the logarithmic score with general $\phi$-divergence regularization.
Our contributions are twofold.
First, we derive a finite-dimensional dual formulation of PrO inference.
For $n$ observations, the dual problem has $n+1$ variables.
We establish zero-duality-gap criteria and optimality conditions that relate the primal and dual solutions.
When primal and dual solutions exist, these conditions yield a semi-analytical representation of the PrO posterior and certificates for assessing the accuracy of numerical solutions.
For Kullback--Leibler regularization, the posterior has an exponential form.
Second, we derive a finite-sample excess predictive-risk bound for approximate PrO posteriors that separates sampling fluctuation, approximation under a divergence budget, regularization, and numerical optimization error.
The result applies even when the benchmark predictive risk is not attained by any probability distribution over the model parameters having finite divergence from the reference distribution.
We use an exactly solvable categorical example to show that predictive-risk convergence can imply convergence to a unique predictive distribution even though the parameter distributions have no weak limit on the original parameter space.
The example also shows that different $\phi$-divergences can require different regularization schedules.
We conclude with a misspecified Gaussian location-mixture example that illustrates the dual computation, primal recovery, and numerical accuracy checks.

\end{abstract}

\begin{keyword}[class=MSC]
\kwdgroup[type=primary]{
\kwd{62F15}
\kwd{90C46}
\kwd{90C15}}
\end{keyword}

\begin{keyword}
\kwd{Optimization}
\kwd{conjugate duality}
\kwd{stochastic programming}
\kwd{posterior predictive distributions}
\kwd{misspecification}
\kwd{generalized Bayes}
\end{keyword}

\end{frontmatter}

\section{Introduction}
\label{sec:intro}

Probabilistic inference provides a framework for quantifying uncertainty and forming predictions from observational data.
The predictive performance of such methods can deteriorate under model misspecification, that is, when the statistical model does not contain the data-generating distribution.
Methods such as generalized Bayesian inference and Gibbs posteriors \cite{BissiriHolmesWalker2016,Zhang2006b,zhang_e-entropy_2007, jiang_gibbs_2008} have been developed in part to improve robustness to misspecification.
These methods nevertheless remain parameter-oriented: they evaluate losses associated with individual model components and then average those losses under a probability distribution over the parameters.
Under misspecification, this focus on model components can lead to predictive distributions that do not adequately represent uncertainty \cite{bjk_kleijn_bernstein-von-mises_2012}.

Predictively oriented (PrO) inference \cite{McLatchieCheriefAbdellatifFrazierKnoblauch2025} has been proposed as a framework that directly targets prediction.
PrO inference formulates posterior construction as an infinite-dimensional regularized optimization problem over probability distributions on the model parameters.
The objective combines a loss term, which measures predictive agreement with the observed data through a statistical scoring rule, and a divergence term that regularizes the candidate posterior by discouraging departure from a reference (i.e., prior) distribution.
The key distinction between PrO and Bayesian or generalized Bayesian formulations is the order in which averaging and scoring occur.
Let \(\Theta\) denote the parameter space, \(p_\param\) the model density for \(\param\in\Theta\), \(Q\) a probability distribution on \(\Theta\), and \(x\) a point in the sample space.
In the logarithmic case considered here, the distinction is
\[
  \int_\Theta -\log p_\param(x)\,dQ(\param)
  \qquad\text{versus}\qquad
  -\log\left\{ \int_\Theta p_\param(x)\,dQ(\param) \right\}.
\]
The first expression averages the logarithmic losses of the individual model components, whereas the second applies the logarithmic loss after averaging the component densities and therefore scores the resulting predictive distribution directly.

Although conceptually appealing in its alignment with the predictive goals of many inference problems, the formulation of PrO inference as an infinite-dimensional optimization problem presents challenges for computation and theoretical analysis.
Existing approaches have largely treated this problem through variational or particle-based techniques \cite{McLatchieCheriefAbdellatifFrazierKnoblauch2025}.
We instead use tools from convex analysis to derive a finite-dimensional dual formulation of the PrO problem for the logarithmic score and general $\phi$-divergence regularization.
The resulting dual problem has only $n+1$ scalar variables, where $n$ is the number of observations, even when the primal problem is infinite dimensional.
We provide conditions under which the primal and dual problems have the same optimal value and the dual optimum is attained.
We also derive optimality conditions that relate the primal and dual solutions.
When a primal solution exists, these conditions recover its density relative to the prior, and hence its predictive distribution, from the dual variables and also provide certificates for assessing the accuracy of numerical solutions.
For Kullback--Leibler (KL) regularization, the optimal density relative to the prior has an explicit exponential form.

Our second primary contribution is an analysis of predictive-risk error for approximate PrO posteriors.
We take as a benchmark the infimum of the predictive risk over admissible probability distributions having finite divergence.
This infimum need not be attained and may be approachable only along a sequence of distributions whose divergences increase without bound.
To account for this possibility, we introduce an approximation profile that records the smallest excess predictive risk achievable under a given divergence budget.
We derive a finite-sample excess-risk bound that separates finite-sample fluctuations, approximation under the divergence budget, regularization, and inexact numerical optimization.
The bound yields predictive-risk consistency when these effects vanish and convergence rates when their magnitudes can be quantified.
This analysis is complementary to the work of \cite{McLatchieCheriefAbdellatifFrazierKnoblauch2025}, which establishes broad predictive guarantees for PrO posteriors and compares their behavior with Bayesian and Gibbs methods.

To make our theoretical results concrete, we consider two examples.
The first is an exactly solvable categorical example that demonstrates the importance of allowing nonattainment.
In the example, no probability distribution over the original parameter space attains the benchmark predictive risk.
Nevertheless, any sequence of such distributions whose predictive risks approach the benchmark induces predictive distributions that converge in total variation to a unique limit.
The probability distributions over the parameters themselves do not converge weakly on the original parameter space; instead, their mass moves toward the excluded boundary of that space.
For this example, our conclusion goes beyond the general predictive-performance guarantee of \cite{McLatchieCheriefAbdellatifFrazierKnoblauch2025}: it identifies the limiting predictive distribution and characterizes the behavior of the distributions over the parameters even though no optimal parameter distribution exists.
We also obtain sharp approximation rates for KL and $\chi^2$ divergences, showing that the two penalties require different regularization schedules.
The second example considers a misspecified Gaussian location-mixture model. We first specialize the predictive-risk bound to this setting and then illustrate the KL dual computation, primal recovery, and numerical optimality checks.

The remainder of the paper is organized as follows. \Cref{sec:background} introduces the relevant notation, the statistical model, predictive distributions, and the variational formulation of Bayesian inference. \Cref{sec:problem-formulation} defines the $\phi$-divergence regularization, formalizes the empirical PrO problem, and states the assumptions used in our analysis. \Cref{sec:finite-sample-duality} derives the dual PrO problem, establishes stability and optimality conditions, and specializes the results to KL regularization. \Cref{sec:predictive-convergence-rates} develops the predictive-risk analysis. \Cref{sec:examples} presents the categorical and Gaussian location-mixture examples. Finally, \Cref{sec:discussion-conclusion} concludes.

\section{Background and notation}
\label{sec:background}

This section introduces the data, statistical model, predictive distributions, and notation used throughout the paper.
We also briefly review the variational formulation of Bayesian inference to provide context for the PrO problem in the next section.

\subsection{Data and statistical model}

Let $(\mathsf X,\mathcal X)$ be a measurable sample space. We observe
\[
  X_1,\ldots,X_n
  \overset{\mathrm{iid}}{\sim}
  \datadist,
\]
where $\datadist$ is the data-generating distribution on
$(\mathsf X,\mathcal X)$.
For a probability measure $P$ on $(\mathsf X,\mathcal X)$ and a
$P$-integrable function $\psi$, we write
\[
  P\psi\coloneqq\int_{\mathsf X}\psi\,dP.
\]
The empirical measure is
$P_n\coloneqq n^{-1}\sum_{i=1}^n\delta_{X_i}$, where $\delta_x$ denotes the
Dirac probability measure at $x$; hence
\[
  P_n\psi=\frac{1}{n}\sum_{i=1}^n\psi(X_i).
\]
To approximate the data-generating distribution $\datadist$, let $(\Theta,\mathcal T)$ be a measurable parameter space, and let
\[
  \{P_\param:\param\in\Theta\}
\]
be a statistical model on $(\mathsf X,\mathcal X)$. Let $\priordist$ be a
prior probability measure on $(\Theta,\mathcal T)$.
Assume that
\[
  P_\param\ll\mu,
  \qquad \param\in\Theta,
\]
where $\mu$ is a common $\sigma$-finite measure on
$(\mathsf X,\mathcal X)$ and $\ll$ denotes absolute continuity. We write
\[
  p_\param
  \coloneqq
  \frac{dP_\param}{d\mu},
\]
and assume that the map
$(x,\param)\mapsto p_\param(x)$ is jointly measurable.

\subsection{Predictive distributions}

We call a probability measure $Q$ on $(\Theta,\mathcal T)$ a mixing
distribution. The predictive distribution induced by $Q$ is the probability
measure $P_Q$ on $(\mathsf X,\mathcal X)$ defined by
\begin{equation*}
\label{eq:predictive-mixture-distribution}
  P_Q(A)
  \coloneqq
  \int_\Theta P_\param(A)\,dQ(\param),
  \qquad A\in\mathcal X.
\end{equation*}
Tonelli's theorem implies that $P_Q\ll\mu$. A density of $P_Q$ with respect to $\mu$ is
\begin{equation*}
\label{eq:predictive-mixture-density}
  p_Q(x)
  \coloneqq
  \int_\Theta p_\param(x)\,dQ(\param),
  \qquad x\in\mathsf X.
\end{equation*}

\subsection{Bayesian inference}

For comparison with the PrO problem presented in the next section, we review the variational formulation of Bayesian inference.
For a probability measure $Q$ on $(\Theta,\mathcal T)$, the Kullback--Leibler divergence of $Q$ from $\priordist$ is
\[
  D_{\mathrm{KL}}(Q\|\priordist)
  \coloneqq
  \begin{cases}
  \displaystyle
  \int_\Theta
  \log\left(\frac{dQ}{d\priordist}\right)\,dQ,
  & Q\ll\priordist,\\[0.8em]
  +\infty,
  & \text{otherwise}.
  \end{cases}
\]
Under standard regularity conditions, the Bayesian posterior minimizes
\begin{equation}
\label{eq:bayes-variational}
  -\int_\Theta
  \left\{
  \frac1n\sum_{i=1}^n\log p_\param(X_i)
  \right\}
  dQ(\param)
  +
  \frac1nD_{\mathrm{KL}}(Q\|\priordist)
\end{equation}
over probability measures $Q$ on $(\Theta,\mathcal T)$ \cite{Catoni2007,DonskerVaradhan1975}.
The first term in \eqref{eq:bayes-variational} averages the logarithmic losses of the individual model distributions under $Q$.
Generalized Bayesian methods retain this variational structure while replacing the negative log-likelihood with a general loss \cite{BissiriHolmesWalker2016}.
In contrast, PrO inference applies the logarithmic loss directly to the predictive distribution $P_Q$, as formalized in \Cref{sec:problem-formulation}.

\section{Problem formulation}
\label{sec:problem-formulation}

This section specifies the admissible mixing distributions and divergence regularization, formulates the empirical PrO problem, and states the assumptions used in the duality analysis.

\subsection{Admissible mixing distributions and divergence}

Let $\mathcal E$ be a vector space of real-valued, $\mathcal T$-measurable functions, identified up to
$\priordist$-almost-sure equality.
The additional structure imposed on $\mathcal E$ is stated in \Cref{ass:finite-sample-convex-setup}.
Define the admissible class of mixing distributions by
\begin{equation}
\label{eq:admissible-mixing-class}
  \mathcal Q_{\mathcal E}
  \coloneqq
  \left\{
      Q:
      \;
      Q\text{ is a probability measure on }(\Theta,\mathcal T),
      \;\;
      Q\ll\priordist,
      \;\;
      \frac{dQ}{d\priordist}\in\mathcal E
  \right\}.
\end{equation}
Under the identification $q=dQ/d\priordist$, the corresponding set of admissible mixing densities is
\begin{equation*}
\label{eq:primal-feasible-set}
  \mathcal C
  \coloneqq
  \left\{
      q\in\mathcal E:
      \;
      \int_\Theta q\, d\priordist=1,
      \;\;
      q\geq0\;\priordist\text{-almost surely}
  \right\}.
\end{equation*}
Conversely, every $q\in\mathcal C$ defines a mixing distribution $Q\in\mathcal Q_{\mathcal E}$ by
\[
  Q(A)
  \coloneqq
  \int_A q\,d\priordist,
  \qquad A\in\mathcal T.
\]
Let $\phi:\mathbb R\to[0,+\infty]$ be a lower-semicontinuous convex function satisfying
\begin{equation*}
\label{eq:phi-assumptions}
  \phi(1)=0
  \qquad\text{and}\qquad
  \phi(t)=+\infty
  \quad\text{for}\quad
  t<0.
\end{equation*}
For a probability measure $Q$ on $(\Theta,\mathcal T)$, define its $\phi$-divergence from $\priordist$ by
\begin{equation*}
\label{eq:phi-divergence}
  D_\phi(Q\|\priordist)
  \coloneqq
  \begin{cases}
  \displaystyle
  \int_\Theta
  \phi\left(
      \frac{dQ}{d\priordist}(\param)
  \right)
  d\priordist(\param),
  & Q\ll\priordist,\\[1em]
  +\infty,
  & \text{otherwise}.
  \end{cases}
\end{equation*}
The associated integral functional $\Phi:\mathcal E\to[0,+\infty]$ is
\begin{equation*}
\label{eq:Phi-definition}
  \Phi(q)
  \coloneqq
  \int_\Theta
  \phi(q(\param))\,d\priordist(\param).
\end{equation*}
Thus, if $Q\in\mathcal Q_{\mathcal E}$ and $q=dQ/d\priordist$, then
\[
  D_\phi(Q\|\priordist)
  =
  \Phi(q).
\]
The convention $\phi(t)=+\infty$ for $t<0$ encodes the nonnegativity of a mixing density in the extended real-valued functional $\Phi$.
Finally, define the Fenchel conjugate of $\phi$ by
\begin{equation*}
\label{eq:phi-conjugate}
  \phi^*(s)
  \coloneqq
  \sup_{t\in\mathbb R}
  \{st-\phi(t)\}
  =
  \sup_{t\geq0}
  \{st-\phi(t)\},
  \qquad s\in\mathbb R.
\end{equation*}
The corresponding conjugacy relation for the integral functional $\Phi$ is recorded in \Cref{rem:integral-conjugacy}.

\subsection{PrO inference}

Given the observations $X_1,\ldots,X_n$, define
\begin{equation*}
\label{eq:fi-definition}
  f_i(\param)
  \coloneqq
  p_\param(X_i),
  \qquad i=1,\ldots,n.
\end{equation*}
We regard each $f_i$ as a fixed measurable function on $\Theta$.
For a mixing distribution $Q$, define the predictive log loss by
\begin{equation*}
\label{eq:loss-Q-definition}
  \ell_Q(x)
  \coloneqq
  -\log p_Q(x),
\end{equation*}
with the convention that $\ell_Q(x)=+\infty$ when $p_Q(x)=0$. If
$Q\ll\priordist$ and $q=dQ/d\priordist$, then
\begin{equation*}
\label{eq:empirical-predictive-loss}
  P_n\ell_Q
  =-\frac1n\sum_{i=1}^n\log p_Q(X_i)
  =-\frac1n\sum_{i=1}^n
  \log\left\{
  \int_\Theta f_i(\param)q(\param)\,d\priordist(\param)
  \right\}.
\end{equation*}
Unlike the data-fit term in \eqref{eq:bayes-variational}, this expression averages the model densities before applying the logarithmic loss.
For a regularization parameter $\varepsilon_n>0$, the PrO problem has the equivalent measure-space and density-space formulations
\begin{equation}
\label{eq:primal-normalized}
\tag{$\primeProb$}
  \inf_{Q\in\mathcal Q_{\mathcal E}}
  \left\{
  P_n\ell_Q + \varepsilon_nD_\phi(Q\|\priordist)
  \right\}
  =
  \inf_{q\in\mathcal C}
  \left[
  -\frac1n\sum_{i=1}^n
  \log
  \int_\Theta f_i(\param)q(\param)\,d\priordist(\param)
  + \varepsilon_n\Phi(q)
  \right].
\end{equation}
We refer to the common optimization problem in \eqref{eq:primal-normalized} as the primal PrO problem, or simply the primal problem, and let $\operatorname{val}(\primeProb)$ denote its optimal value.
Since $Q\mapsto p_Q(X_i)$ is linear, $z\mapsto-\log z$ is convex on $(0,+\infty)$, and $D_\phi(\,\cdot\,\|\priordist)$ is convex, the objective in \eqref{eq:primal-normalized} is convex in $Q$.
When the infimum in \eqref{eq:primal-normalized} is attained, let $Q_n^\star\in\mathcal Q_{\mathcal E}$ denote an exact PrO solution and write
\(
  q_n^\star
  \coloneqq
  \frac{dQ_n^\star}{d\priordist}.
\)
More generally, for an optimization tolerance $\xi_n\geq0$, a $\xi_n$-approximate PrO solution is any $\widehat Q_n\in\mathcal Q_{\mathcal E}$ satisfying
\begin{equation}
\label{eq:approximate-primal-optimality}
  P_n\ell_{\widehat Q_n}
  +
  \varepsilon_nD_\phi(\widehat Q_n\|\priordist)
  \leq
  \text{val}(\primeProb)+\xi_n.
\end{equation}
We write
\(
  \widehat q_n
  \coloneqq
  \frac{d\widehat Q_n}{d\priordist}.
\)
When \(\xi_n=0\), $\widehat Q_n$ is an exact PrO solution.

\subsection{Standing assumptions}

The dual PrO problem is formulated in a paired function space setting.
A vector space $V$ of real-valued, \(\mathcal T\)-measurable functions is called \emph{decomposable} in association with $\priordist$ if ($i$) it contains all bounded real-valued, \(\mathcal T\)-measurable functions and ($ii$) for all $u\in V$ and $E\in\mathcal T$, the function $\chi_E u\in V$, where $\chi_E$ denotes the characteristic function of $E$.
Since $\priordist$ is a finite measure, this definition of decomposability coincides with that in \cite{Rockafellar1976Integral}.

\begin{assumption}[Problem assumptions]
\label[assumption]{ass:finite-sample-convex-setup}
For each realized sample $X_1,\ldots,X_n$, the following conditions hold.
\begin{enumerate}
  \item The spaces $\mathcal E$ and $\mathcal E^*$ are decomposable vector spaces of real-valued, $\mathcal T$-measurable functions
  that are paired with respect to the bilinear form
  \[
  \langle q, q^*\rangle = \int_\Theta q(\theta)q^*(\theta)\,d\priordist(\theta), \quad q\in\mathcal{E},\quad q^*\in\mathcal{E}^*,
  \]
  which means \(\mathcal E\) and \(\mathcal E^*\) have locally convex topologies that are compatible with the bilinear form \cite{Rockafellar1974}.
  The space \(\mathcal{E}^*\) need not be the entirety of the topological dual space of \(\mathcal E\).

  \item  The functions $f_1,\ldots,\,f_n\in \mathcal{E}^*$.

  \item There exists \(q^\circ\in\mathcal C\cap\dom\Phi\) such that
  \[
    \int_\Theta f_i(\theta)q^\circ(\theta)\,d\priordist(\theta)>0,
    \qquad i=1,\ldots,n.
  \]
\end{enumerate}
\end{assumption}

The first two conditions imply that the maps
\(
  q
  \mapsto
  \int_\Theta q\,d\priordist
\)
and
\[
  q
  \mapsto
  \int_\Theta f_iq\,d\priordist,
  \qquad i=1,\ldots,n,
\]
are continuous linear functionals on $\mathcal E$.
The third condition ensures that \eqref{eq:primal-normalized} is feasible.

\begin{remark}[Fenchel conjugate]
\label[remark]{rem:integral-conjugacy}

Let $\sigma(\mathcal E,\mathcal E^*)$ denote the weak topology on $\mathcal E$ induced by the pairing $\langle\cdot,\cdot\rangle$; that is, the coarsest topology for which $q\mapsto\langle q,q^*\rangle$ is continuous for every $q^*\in\mathcal E^*$.
The integral functionals generated by $\phi$ and $\phi^*$ are proper.
Indeed, decomposability of $\mathcal E$ implies that $1\in\mathcal E$, and
\[
  \Phi(1)
  =
  \int_\Theta \phi(1)\,d\priordist
  =
  0.
\]
Moreover, since $\phi\geq0$ and $\phi(1)=0$, we have $\phi^*(0)=-\inf_{t\in\mathbb R}\phi(t)=0$.
Because
$0\in\mathcal E^*$,
\[
  \int_\Theta \phi^*(0)\,d\priordist
  =
  0.
\]
The functional $\Phi$ is convex because $\phi$ is convex.
By \cite[Theorem~3C]{Rockafellar1976Integral}, decomposability of $\mathcal E$ and $\mathcal E^*$ implies that $\Phi$ is $\sigma(\mathcal E,\mathcal E^*)$-lower semicontinuous and that
\begin{equation}
\label{eq:integral-conjugacy}
  \Phi^*(q^*)
  \coloneqq
  \sup_{q\in\mathcal E}
  \left\{
      \langle q,q^*\rangle-\Phi(q)
  \right\}
  =
  \int_\Theta
  \phi^*(q^*(\theta))\,d\priordist(\theta),
  \qquad q^*\in\mathcal E^*.
\end{equation}
In particular, \eqref{eq:integral-conjugacy} applies to every $q^*\in\operatorname{span}\{1,f_1,\ldots,f_n\}$.
\end{remark}

\begin{remark}[Examples of spaces]
\label{rem:examples-primal-space}
Suppose that $\Theta=\{\param_1,\ldots,\param_K\}$ and $\priordist_j\coloneqq\priordist(\{\param_j\})>0$.
We may take $\mathcal E=\mathcal E^*=\mathbb R^K$ with the weighted pairing
\(
  \langle q,s\rangle
  =
  \sum_{j=1}^K\priordist_jq_js_j.
\)
Then
\[
  \mathcal C
  =
  \left\{
  q\in\mathbb R^K:
  \;\;
  \sum_{j=1}^K\priordist_jq_j=1
  \;\;
  q_j\geq0
  \right\}
\]
with
\[
  \Phi(q)
  =
  \sum_{j=1}^K\priordist_j\phi(q_j)
  \quad
  \text{and}
  \quad
  \Phi^*(s)
  =
  \sum_{j=1}^K\priordist_j\phi^*(s_j).
\]
The density $q\equiv1$, corresponding to $Q=\priordist$, is feasible and has zero divergence.
If, in addition,
\[
  p_\priordist(X_i)
  =
  \sum_{j=1}^K\priordist_jf_i(\param_j)
  >0,
  \qquad i=1,\ldots,n,
\]
then $q\equiv1$ satisfies the third condition in
\Cref{ass:finite-sample-convex-setup}.

For infinite-dimensional models, typical choices include paired Lebesgue spaces $L^p(\priordist)$ and $L^{p'}(\priordist)$, as well as complementary Orlicz spaces.
The appropriate choice depends on the growth of $\phi$ and the integrability of the functions $f_i$.
\end{remark}

\section{Duality}
\label{sec:finite-sample-duality}

\Cref{sec:dual_derv} establishes the existence of the dual formulation of the density-based PrO posterior optimization problem, and \Cref{sec:stability} provides conditions under which this formulation is stable, that is, when ($i$) the optimal values of the primal and dual problems coincide and ($ii$) the dual problem has a solution.
The latter section then establishes necessary optimality conditions that relate the solutions of the primal problem and the dual problem to obtain a novel semi-analytical form for the PrO posterior along with certificates that verify the accuracy of approximate solutions of the dual problem.
\Cref{sec:kl_regularization} shows that in the case of the KL regularization, the optimality conditions yield a posterior that has an exponential form.
\Cref{sec:practical_stability} provides verifiable sufficient conditions under which stability is achieved and discusses these conditions in the context of some specific problem classes.

\subsection{Deriving the dual problem}\label{sec:dual_derv}

The dual problem is derived from the PrO problem in the standard way: the PrO problem is embedded into a parameterized family of optimization problems with the dual problem derived from the optimal-value function of that family.

For notational convenience, let
\begin{equation*}
\label{eq:T-definition}
  T_n:\mathcal E\rightarrow\mathbb R^n,
  \qquad
  (T_nq)_i
  \coloneqq
  \int_\Theta f_i(\param)q(\param)\,d\priordist(\param),
  \qquad i=1,\ldots,n.
\end{equation*}
By \Cref{ass:finite-sample-convex-setup}, each coordinate of \(T_n\) is a continuous linear functional on \(\mathcal E\), and hence \(T_n\) is continuous.
Define the extended real-valued convex function
\begin{equation*}
\label{eq:g-definition}
  g(z)
  \coloneqq
  \begin{cases}
  -\dfrac1n\displaystyle\sum_{i=1}^n\log z_i,
      & z\in\mathbb R^n_{++},\\[0.75em]
  +\infty,
      & z\notin\mathbb R^n_{++}.
  \end{cases}
\end{equation*}
For \(Q\ll\priordist\) with density \(q=dQ/d\priordist\), we have
\(
  (T_nq)_i=p_Q(X_i)
\), \(i=1,\ldots,n\),
and therefore
\[
  g(T_nq)=P_n\ell_Q.
\]
It follows that the primal PrO problem \eqref{eq:primal-normalized} can be written as
\begin{equation*}
  \label{eq:finite-sample-primal}
  \inf_{q\in\mathcal E}
  \left\{
  g(T_nq)+\varepsilon_n\Phi(q):
  \int_\Theta q(\param)\,d\priordist(\param)=1
  \right\},
\end{equation*}
where the nonnegativity constraint is encoded by \(\Phi\), since \(\phi(t)=+\infty\) for \(t<0\).
We define a family of optimization problems parameterized by \((u,r)\in\mathbb R^n\times\mathbb R\) through the optimal-value function
\begin{equation*}
\label{eq:finite-sample-perturbation-value}
  \valfun(u,r)
  \coloneqq
  \inf_{q\in\mathcal E}
  \left\{
  g(T_nq+u)+\varepsilon_n\Phi(q):
  \int_\Theta q(\param)\,d\priordist(\param)=1+r
  \right\}.
\end{equation*}
Note that \(\valfun(0,0)\) is the primal problem \eqref{eq:primal-normalized}.
To state the dual problem compactly, for \(\alpha\in\mathbb R^n_{++}\coloneqq (0,+\infty)^n\), define the weighted average
\begin{equation*}
\label{eq:h-n-alpha}
  \avLik{\alpha}
  \coloneqq
  \frac1n\sum_{i=1}^n\alpha_i f_i(\param).
\end{equation*}

\begin{proposition}[Dual formulation]
\label[proposition]{prop:empirical-dual-representation}
Suppose Assumption~\ref{ass:finite-sample-convex-setup} holds.
The dual problem associated with the primal problem \eqref{eq:primal-normalized} is
\begin{equation}
\label{eq:finite-sample-dual}
\tag{$\dualProb$}
  \sup_{\substack{\alpha\in\mathbb R^n_{++}\\ \eta\in\mathbb R}}
  \left\{
  \frac1n\sum_{i=1}^n(1+\log\alpha_i)
  -\eta
  -\varepsilon_n
  \int_\Theta
  \phi^*\!\left(
  \frac{\avLik{\alpha}-\eta}{\varepsilon_n}
  \right)
  d\priordist(\param)
  \right\}.
\end{equation}
\end{proposition}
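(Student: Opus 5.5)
Following the perturbation framework of Rockafellar, the dual problem is defined as $\sup_{(y,\eta)\in\mathbb R^n\times\mathbb R}\{-\valfun^*(y,\eta)\}$, where $\valfun^*$ is the Fenchel conjugate of the optimal-value function with respect to the pairing of $\mathbb R^n\times\mathbb R$ with itself. So the plan is to compute $\valfun^*$ explicitly and then reparametrize the dual variables. First, write $\valfun(u,r)=\inf_{q\in\mathcal E}F(q,u,r)$ with
\[
F(q,u,r)\coloneqq g(T_nq+u)+\varepsilon_n\Phi(q)+\iota_{\{0\}}\Bigl(\int_\Theta q\,d\priordist-1-r\Bigr),
\]
where $\iota$ is the convex indicator. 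This $F$ is jointly convex. Then
\[
\valfun^*(y,\eta)=\sup_{q,u,r}\Bigl\{\langle y,u\rangle+\eta r-F(q,u,r)\Bigr\}.
\]

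To evaluate this supremum, I would eliminate the perturbations by the substitutions $z=T_nq+u$ and $r=\int q\,d\priordist-1$. The supremum then splits as
\[
\sup_{z}\{\langle y,z\rangle-g(z)\}+\sup_{q\in\mathcal E}\Bigl\{-\langle y,T_nq\rangle+\eta\int_\Theta q\,d\priordist-\varepsilon_n\Phi(q)\Bigr\}-\eta .
\]
The first term is $g^*(y)$. Since $g(z)=-\frac1n\sum\log z_i$, a coordinatewise computation gives $g^*(y)=\sum_i\bigl(-\frac1n-\frac1n\log(-ny_i)\bigr)$ for $y\in\mathbb R^n_{--}$ and $+\infty$ otherwise. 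For the second term, note that $\langle y,T_nq\rangle=\langle q,\sum_i y_if_i\rangle$, and that $\sum_i y_if_i$ together with $1$ lies in $\mathcal E^*$ by \Cref{ass:finite-sample-convex-setup}. The term is therefore $(\varepsilon_n\Phi)^*(\eta-\sum_iy_if_i)=\varepsilon_n\Phi^*\bigl((\eta-\sum_iy_if_i)/\varepsilon_n\bigr)$. By \Cref{rem:integral-conjugacy} this equals $\varepsilon_n\int_\Theta\phi^*\bigl((\eta-\sum_iy_if_i)/\varepsilon_n\bigr)\,d\priordist$.

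Next comes the change of variables $\alpha_i\coloneqq -ny_i>0$, which gives $-\sum_iy_if_i=\avLik{\alpha}$. Combining the pieces,
\[
-\valfun^*(y,\eta)=\frac1n\sum_i(1+\log\alpha_i)+\eta-\varepsilon_n\int_\Theta\phi^*\Bigl(\frac{\avLik{\alpha}+\eta}{\varepsilon_n}\Bigr)d\priordist .
\]
Replacing $\eta$ by $-\eta$, which is harmless since $\eta$ ranges over $\mathbb R$, yields exactly \eqref{eq:finite-sample-dual}. Any $y\notin\mathbb R^n_{--}$ gives $-\valfun^*=-\infty$ and can be dropped, which explains the constraint $\alpha\in\mathbb R^n_{++}$.

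The main point needing care is the splitting step. The substitution $u=z-T_nq$ is a bijection for each fixed $q$, so the supremum over $(u,r)$ decouples legitimately even when values are $+\infty$. The other delicate point is justifying the integral-conjugacy formula for the specific element $(\eta-\sum y_if_i)/\varepsilon_n\in\operatorname{span}\{1,f_1,\dots,f_n\}\subset\mathcal E^*$. That is supplied by Rockafellar's Theorem~3C through decomposability, as recorded in \Cref{rem:integral-conjugacy}. Everything else is scalar conjugate bookkeeping.
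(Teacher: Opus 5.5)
Your proposal is correct and is essentially the paper's argument. The paper forms the Lagrangian $L_n(q,\alpha,\eta)=\inf_{u,r}\{\mathcal F_n(q,u,r)+\frac1n\sum_i\alpha_iu_i+\eta r\}$ using the same substitution $z=T_nq+u$ and then takes $\inf_q$. That is exactly your joint computation of the negative conjugate of the value function. The only difference is that the paper builds the sign and the $1/n$ scaling into the pairing from the start, whereas you introduce them afterward through $\alpha_i=-ny_i$ and $\eta\mapsto-\eta$.
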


\begin{proof}
Note that
\(
  \valfun(u,r)=\inf_{q\in\mathcal E}\mathcal F_n(q,u,r),
\)
where
\[
  \mathcal F_n(q,u,r)
  \coloneqq
  \begin{cases}
  g(T_nq+u)+\varepsilon_n\Phi(q),
  &
  \displaystyle \int_\Theta q\,d\priordist=1+r,
  \\
  +\infty,
  &
  \text{otherwise}.
  \end{cases}
\]
For \(\alpha\in\mathbb R^n_{++}\) and \(\eta\in\mathbb R\), the Lagrangian associated with $\mathcal F_n$ \cite{Rockafellar1974} is
\begin{align*}
  L_n(q,\alpha,\eta)
  &\coloneqq
  \inf_{\substack{u\in\mathbb R^n\\ r\in\mathbb R}}
  \left\{
  \mathcal F_n(q,u,r)
  +\frac1n\sum_{i=1}^n\alpha_i u_i
  +\eta r
  \right\}\\
  &=
  \inf_{u\in\mathbb R^n}
  \left\{
  g(T_nq+u) + \varepsilon_n\Phi(q)
  +\frac1n\sum_{i=1}^n\alpha_i u_i
  +\eta \left(\int_\Theta q\,d\priordist-1\right)
  \right\}.
  \end{align*}
Setting \(z=T_nq+u\), so that \(u=z-T_nq\), gives
\begin{align*}
  &\inf_{u\in\mathbb R^n}
  \left\{
  g(T_nq+u)
  +\frac1n\sum_{i=1}^n\alpha_i u_i
  \right\}
  =
  -\frac1n\sum_{i=1}^n\alpha_i(T_nq)_i
  +
  \inf_{z\in\mathbb R^n}
  \left\{
  g(z)+\frac1n\sum_{i=1}^n\alpha_i z_i
  \right\}.
\end{align*}
Moreover, since \(g(z)=+\infty\) outside \(\mathbb R^n_{++}\),
\begin{align*}
  \inf_{z\in\mathbb R^n}
  \left\{
  g(z)+\frac1n\sum_{i=1}^n\alpha_i z_i
  \right\}
  &=
  \frac1n\sum_{i=1}^n
  \inf_{z_i>0}
  \left\{
  -\log z_i+\alpha_i z_i
  \right\}
  =
  \frac1n\sum_{i=1}^n(1+\log\alpha_i),
\end{align*}
where the last equality follows from
\begin{equation}
\label{eq:negative-log-conjugacy}
  -(1 + \log \alpha_i)
  =
  \sup_{z_i>0}
  \{-\alpha_i z_i+\log z_i\},
  \qquad
  \alpha_i>0.
\end{equation}
Therefore,
\begin{align*}
  L_n(q,\alpha,\eta)
  ={}&
  -\int_\Theta \avLikPFree{\alpha}q\,d\priordist
  +\frac1n\sum_{i=1}^n(1+\log\alpha_i)
  +\varepsilon_n\Phi(q)
  +\eta\left(\int_\Theta q\,d\priordist-1\right).
\end{align*}
The dual problem associated with \eqref{eq:primal-normalized} is
\[
  \sup_{\substack{\alpha\in\mathbb R^n_{++}\\ \eta\in\mathbb R}}
  \inf_{q\in\mathcal E}L_n(q,\alpha,\eta).
\]
For fixed \((\alpha,\eta)\),
\begin{align*}
  \inf_{q\in\mathcal E}L_n(q,\alpha,\eta)
  ={}&
  \frac1n\sum_{i=1}^n(1+\log\alpha_i)
  -\eta
  +\inf_{q\in\mathcal E}
  \left\{
  \varepsilon_n\Phi(q)
  -\int_\Theta
  \bigl(\avLikPFree{\alpha}-\eta\bigr)q\,d\priordist
  \right\} \\
  ={}&
  \frac1n\sum_{i=1}^n(1+\log\alpha_i)
  -\eta
  -\varepsilon_n
  \Phi^*\left(
  \frac{\avLikPFree{\alpha}-\eta}{\varepsilon_n}
  \right).
\end{align*}
Finally, the integral-conjugate representation in
\Cref{rem:integral-conjugacy} gives
\[
  \Phi^*\left(
  \frac{\avLikPFree{\alpha}-\eta}{\varepsilon_n}
  \right)
  =
  \int_\Theta
  \phi^*\left(
  \frac{\avLik{\alpha}-\eta}{\varepsilon_n}
  \right)
  \,d\priordist(\param),
\]
which yields \eqref{eq:finite-sample-dual}.
\end{proof}

\subsection{Stability and optimality conditions}\label{sec:stability}

Our next result characterizes the absence of a duality gap, i.e., of a discrepancy between the optimal values of the primal and dual problems, by the behavior of the value function $\valfun$ at the origin.

\begin{theorem}[Duality and stability]
\label{prop:stable-strong-duality}
Suppose Assumption~\ref{ass:finite-sample-convex-setup} holds and \(\valfun(0,0)\) is finite.
Then there is no duality gap between \eqref{eq:primal-normalized} and \eqref{eq:finite-sample-dual} if and only if the value function $\valfun$ is lower semicontinuous at $(0,0)$.
More specifically, if $\valfun$ is finite and continuous in a neighborhood of $(0,0)$, then the PrO problem is stable: there is no duality gap and the supremum in the dual problem is attained.
\end{theorem}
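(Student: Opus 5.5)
The plan is to apply Rockafellar's conjugate duality framework \cite{Rockafellar1974} to the perturbation function $\mathcal F_n(q,u,r)$ from the proof of \Cref{prop:empirical-dual-representation}. The dual problem there was obtained as $\sup_{\alpha,\eta}\inf_q L_n(q,\alpha,\eta)$. In the standard framework this equals $-\valfun^{**}$-type quantities: the dual objective at $y=(\alpha/n,\eta)$ is $-\valfun^*(-y)$ up to the sign convention used in $L_n$.

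\textbf{Step 1: convexity of the value function.} First I would show that $\mathcal F_n$ is jointly convex in $(q,u,r)$. This holds because $g$ is convex, $T_n$ is linear, $\Phi$ is convex, and the constraint set $\{\int q\,d\priordist = 1+r\}$ is affine in $(q,r)$. Hence $\valfun$ is convex on $\mathbb R^{n+1}$, as an infimal projection of a convex function.

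\textbf{Step 2: identify the dual value with the biconjugate.} Next I would compute directly
\[
  \inf_q L_n(q,\alpha,\eta)
  =\inf_{q,u,r}\Bigl\{\mathcal F_n(q,u,r)+\tfrac1n\textstyle\sum_i\alpha_iu_i+\eta r\Bigr\}
  =-\valfun^*\bigl(-\alpha/n,-\eta\bigr),
\]
where the conjugate is taken with respect to the Euclidean pairing. The dual problem ranges over $\alpha\in\mathbb R^n_{++}$ only. I would check that for $\alpha$ with some $\alpha_i\le 0$ the inner infimum is $-\infty$: for $\alpha_i<0$ let $u_i\to+\infty$, and for $\alpha_i=0$ note that $-\log z_i$ is unbounded below. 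Such $\alpha$ therefore contribute nothing to the supremum. Consequently
\[
  \operatorname{val}(\dualProb)=\sup_y\{-\valfun^*(y)\}=\valfun^{**}(0,0).
\]

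\textbf{Step 3: no duality gap iff lower semicontinuity.} For a convex function $\valfun$ on a finite-dimensional space with $\valfun(0,0)$ finite, the biconjugate $\valfun^{**}$ equals the closure $\operatorname{cl}\valfun$ whenever the closure is proper. It is identically $-\infty$ otherwise, and in that case $\valfun$ takes the value $-\infty$ at relative-interior points near $(0,0)$. Then $\valfun^{**}(0,0)=\valfun(0,0)$ is equivalent to $\valfun$ being lsc at $(0,0)$, via Rockafellar's Theorem 7 in \cite{Rockafellar1974}. I expect this step to be the main technical point. Two things need care. First, the improper case must be excluded using finiteness of $\valfun(0,0)$: an lsc convex function that is finite at a point cannot take $-\infty$ anywhere in its domain's closure near it. Second, the finite-dimensional closure must agree with the lsc hull at the origin.

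\textbf{Step 4: stability under local continuity.} Finally, suppose $\valfun$ is finite and continuous near $(0,0)$. Then $\valfun$ is lsc at the origin, so there is no gap by Step 3. Moreover, the subdifferential $\partial\valfun(0,0)$ is nonempty (and compact), since a convex function continuous at a point is subdifferentiable there. Any $y\in\partial\valfun(0,0)$ satisfies
\[
  \valfun(0,0)+\valfun^*(y)=0 .
\]
Therefore $-y$ maps to a dual pair $(\alpha,\eta)$, with $\alpha=-n y_u$ and $\eta=-y_r$, attaining the supremum. By Step 2, $\alpha$ must lie in $\mathbb R^n_{++}$, since otherwise the dual objective would be $-\infty$, which is a contradiction with finiteness.
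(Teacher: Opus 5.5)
Your proposal is correct and follows the same route as the paper, which proves both claims by citing Ekeland--Temam, Chapter~III, Propositions~2.1 and~2.2, for the convex value function of the perturbation $\mathcal F_n$. You unpack those cited results yourself: convexity of the value function, identification of the dual value with its biconjugate at the origin, exclusion of the improper case, and dual attainment through a subgradient; your explicit check that multipliers with some $\alpha_i\le 0$ give dual objective $-\infty$ also justifies restricting the dual to $\alpha\in\mathbb R^n_{++}$, a point the paper leaves implicit.
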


\begin{proof}
A convex value function that is finite and lower semicontinuous at the origin is equivalent to equality of the optimal values of the primal and dual optimization problems, with a finite common value \cite[Chapter~III, Proposition 2.1]{EkelandTemam1999}.
Suppose further that \(\valfun\) is finite and continuous in a neighborhood of \((0,0)\).
Then \(\valfun\) is subdifferentiable at \((0,0)\); hence there is no duality gap and the supremum in \eqref{eq:finite-sample-dual} is attained \cite[Chapter~III, Proposition~2.2]{EkelandTemam1999}.
\end{proof}

The optimality conditions below are obtained from the equality
case of the Fenchel--Young inequality. We therefore recall the
subdifferential, which characterizes when this equality holds \cite{RockafellarWets1998}. For a proper
convex function
\(\psi:\mathbb R\to(-\infty,+\infty]\), define
\begin{equation*}
\label{eq:scalar-subdifferential}
  \partial\psi(s)
  \coloneqq
  \left\{
  t\in\mathbb R:
  \psi(v)\geq\psi(s)+t(v-s)
  \ \text{for every }v\in\mathbb R
  \right\}.
\end{equation*}

\begin{corollary}[Optimality conditions]
\label[corollary]{cor:kkt-system}
Suppose that \Cref{ass:finite-sample-convex-setup} holds, that there is no duality gap, and that the primal and dual optima are attained at \(Q_n^\star\) and \((\alpha^\star,\eta^\star)\), respectively.
Let \( q_n^\star = dQ_n^\star/d\priordist \).
Then
\begin{align}
  q_n^\star(\param)
  &\in
  \partial\phi^*\left(
  \frac{
  \avLik{\alpha^\star}-\eta^\star
  }{
  \varepsilon_n
  }
  \right),
  \qquad
  \priordist\text{-almost surely},
  \label{eq:kkt-q}
  \\[0.5em]
  1 &= \int_\Theta q_n^\star(\param)\,d\priordist(\param),
  \label{eq:kkt-normalization}
  \\[0.5em]
  \alpha_{i}^\star
  &=
  \left\{
  \int_\Theta
  f_i(\param)q_n^\star(\param)\,d\priordist(\param)
  \right\}^{-1}
  =
  \frac{1}{p_{Q_n^\star}(X_i)},
  \qquad
  i=1,\ldots,n.
\label{eq:kkt-alpha}
\end{align}
If \(\phi^*\) is differentiable at the relevant points, then
\begin{equation}
\label{eq:kkt-q-differentiable}
  q_n^\star(\param)
  =
  (\phi^*)'\left(
  \frac{
  \avLik{\alpha^\star}-\eta^\star
  }{
  \varepsilon_n
  }
  \right),
  \qquad
  \priordist\text{-almost surely}.
\end{equation}
Note that, for approximate dual variables, \eqref{eq:kkt-q} characterizes candidate primal solutions, while \eqref{eq:kkt-normalization} and \eqref{eq:kkt-alpha} provide certificates for assessing approximate satisfaction of the primal--dual optimality conditions.
\end{corollary}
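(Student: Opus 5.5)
The plan is to write the duality gap as a sum of nonnegative terms, each a Fenchel--Young gap, and use zero gap to force each to vanish. Let \(s^\star\coloneqq(\avLik{\alpha^\star}-\eta^\star)/\varepsilon_n\). By assumption \(\eta\) and the \(\alpha_i\) are finite and \(f_i\in\mathcal E^*\), so \(s^\star\in\operatorname{span}\{1,f_1,\ldots,f_n\}\subset\mathcal E^*\) and \Cref{rem:integral-conjugacy} applies. Since \(\operatorname{val}(\primeProb)\) is attained at \(q_n^\star\in\mathcal C\) and equals the dual value attained at \((\alpha^\star,\eta^\star)\), subtracting the dual objective from the primal objective gives
\begin{align*}
0 ={}& \frac1n\sum_{i=1}^n\Bigl[-\log z_i^\star+\alpha_i^\star z_i^\star-(1+\log\alpha_i^\star)\Bigr] \\
&+\varepsilon_n\int_\Theta\Bigl[\phi(q_n^\star)+\phi^*(s^\star)-s^\star q_n^\star\Bigr]d\priordist
+\eta^\star\Bigl(\int_\Theta q_n^\star\,d\priordist-1\Bigr),
\end{align*}
where \(z_i^\star\coloneqq(T_nq_n^\star)_i=p_{Q_n^\star}(X_i)\). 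To check this, add and subtract \(\frac1n\sum_i\alpha_i^\star z_i^\star=\int\avLik{\alpha^\star}q_n^\star\,d\priordist\) and \(\eta^\star\int q_n^\star\,d\priordist\). Finiteness of the primal value guarantees \(z^\star\in\mathbb R^n_{++}\), \(\Phi(q_n^\star)<\infty\) and \(\int\phi^*(s^\star)\,d\priordist<\infty\), so the rearrangement is legitimate.

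The last term vanishes because \(q_n^\star\in\mathcal C\), which also gives \eqref{eq:kkt-normalization} directly. Each bracket in the first sum is nonnegative by \eqref{eq:negative-log-conjugacy}, and the integrand in the second term is pointwise nonnegative by the Fenchel--Young inequality for \(\phi\). A sum of nonnegative terms equal to zero forces every term to vanish.

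Each vanishing term then gives one condition.
\begin{itemize}
\item For each \(i\), \(-\log z_i+\alpha_i^\star z_i\) is strictly convex with unique minimizer \(z_i=1/\alpha_i^\star\). Equality in \eqref{eq:negative-log-conjugacy} therefore gives \(\alpha_i^\star=1/z_i^\star\), which is \eqref{eq:kkt-alpha}.
\item The integral term vanishes, so the integrand is zero \(\priordist\)-almost surely. Fenchel--Young equality \(\phi(t)+\phi^*(s)=st\) holds if and only if \(t\in\partial\phi^*(s)\), because \(\phi\) is proper, convex and lower semicontinuous, so \(\phi^{**}=\phi\). This gives \eqref{eq:kkt-q}.
\item If \(\phi^*\) is differentiable at \(s^\star(\param)\), then \(\partial\phi^*\) is the singleton \(\{(\phi^*)'(s^\star(\param))\}\), which gives \eqref{eq:kkt-q-differentiable}.
\end{itemize}

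The main obstacle is justifying the splitting of integrals: one must make sure no \(\infty-\infty\) expressions arise and that \(\int s^\star q_n^\star\,d\priordist\) is finite. This follows from the pairing, since \(s^\star\in\mathcal E^*\) and \(q_n^\star\in\mathcal E\), combined with the finiteness facts noted above.
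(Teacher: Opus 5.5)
Your proposal is correct and follows the paper's route: zero duality gap together with attainment forces equality in \eqref{eq:negative-log-conjugacy} and in the Fenchel--Young inequality for $\Phi$, which is localized pointwise through the integral conjugacy of \Cref{rem:integral-conjugacy}. Your explicit split of the gap into nonnegative pieces supplies the step the paper compresses into ``rearranging this equality''; one harmless slip is that the normalization term should be $\eta^\star\bigl(1-\int_\Theta q_n^\star\,d\priordist\bigr)$ rather than $\eta^\star\bigl(\int_\Theta q_n^\star\,d\priordist-1\bigr)$, which does not matter because it vanishes for $q_n^\star\in\mathcal C$.
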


\begin{proof}
By strong duality, the primal and dual optimal values coincide at \(q_n^\star\) and \((\alpha^\star,\eta^\star)\).
Rearranging this equality gives
\[
  \Phi(q_n^\star)
  +
  \Phi^*\left(
  \frac{\avLikPFree{\alpha^\star}-\eta^\star}{\varepsilon_n}
  \right)
  =
  \int_\Theta
  \frac{\avLik{\alpha^\star}-\eta^\star}
  {\varepsilon_n}
  q_n^\star(\param)\,d\priordist(\param).
\]
This is the Fenchel--Young inequality for \(\Phi\) holding with equality, which is possible if and only if \eqref{eq:kkt-q} holds.
Primal feasibility implies \eqref{eq:kkt-normalization}, and equality in \eqref{eq:negative-log-conjugacy} holds only if \(\alpha=1/z\), which implies \eqref{eq:kkt-alpha}.
If \(\phi^*\) is differentiable at the relevant points, then the subdifferential inclusion reduces to \eqref{eq:kkt-q-differentiable}.
\end{proof}

\begin{remark}[Primal attainment]
Stability gives dual attainment and zero duality gap; it does not by itself give existence of a primal minimizer.
Primal attainment follows separately from lower semicontinuity and inf-compactness of
\[
  q\mapsto g(T_nq)+\varepsilon_n\Phi(q)
\]
on the affine set $\{q:\int q\,d\priordist=1\}$.
See \cite[Theorem 2.6]{BonnansShapiro2000}.
This condition is automatic in finite mixtures, where the feasible set is a simplex.
In infinite-dimensional settings, norm compactness is generally too strong, and one typically works with a weaker topology.
\end{remark}

\subsection{KL regularization}
\label{sec:kl_regularization}

We now specialize the general dual formulation and optimality conditions to KL regularization. In this case, the normalization multiplier can be eliminated analytically and the resulting PrO posterior has an exponential form.

\begin{corollary}[KL regularization]
\label[corollary]{cor:kl-gibbs-form}
Suppose that Assumption~\ref{ass:finite-sample-convex-setup} holds, and let
\[
  \phi(t)
  =
  \begin{cases}
  t\log t-t+1, & t\ge0,\\
  +\infty, & t<0,
  \end{cases}
\]
where $0\log 0=0$.
Then
\(
  \phi^*(s)=e^s-1\),
\(
  s\in\mathbb R.
\)
Consequently, the dual problem in \Cref{prop:empirical-dual-representation} becomes
\begin{equation}
\label{eq:kl-dual-with-eta}
  \sup_{\substack{
  \alpha\in\mathbb R^n_{++}\\
  \eta\in\mathbb R
  }}
  \left\{
  \frac1n\sum_{i=1}^n(1+\log\alpha_i)
  -\eta
  -\varepsilon_n
  \int_\Theta
  \left[
  \exp\left(
  \frac{\avLik{\alpha}-\eta}{\varepsilon_n}
  \right)
  -1
  \right]
  \,d\priordist(\param)
  \right\}.
\end{equation}
For fixed $\alpha$, define
\[
  Z_n(\alpha)
  \coloneqq
  \int_\Theta
  \exp\left(
  \frac{\avLik{\alpha}}{\varepsilon_n}
  \right)
  \,d\priordist(\param).
\]
Whenever $Z_n(\alpha)<\infty$, the maximizing value of $\eta$ is
\begin{equation}
\label{eq:kl-optimal-eta}
  \eta(\alpha)
  =
  \varepsilon_n\log Z_n(\alpha).
\end{equation}
The KL dual problem can therefore be written as
\begin{equation}
\label{eq:kl-reduced-dual}
  \sup_{\substack{
  \alpha\in\mathbb R^n_{++}\\
  Z_n(\alpha)<\infty
  }}
  \left\{
  \frac1n\sum_{i=1}^n(1+\log\alpha_i)
  -\varepsilon_n\log Z_n(\alpha)
  \right\}.
\end{equation}

If there is no duality gap and the primal and dual optima are attained, then the exact primal solution has the exponential form
\begin{equation}
\label{eq:kl-gibbs-density}
  q_n^\star(\param)
  =
  \frac{
  \exp\left(
  \avLik{\alpha^\star}/\varepsilon_n
  \right)
  }{
  Z_n(\alpha^\star)
  },
  \qquad
  \priordist\text{-almost surely},
\end{equation}
where the dual variables satisfy the fixed-point equations
\begin{equation}
  \alpha_{i}^\star
  =
  \left\{
  \int_\Theta
  f_i(\param)
  \frac{
  \exp\left(
  \avLik{\alpha^\star}/\varepsilon_n
  \right)
  }{
  Z_n(\alpha^\star)
  }
  \,d\priordist(\param)
  \right\}^{-1}
  =
  \frac{1}{p_{Q_n^\star}(X_i)},
  \qquad
  i=1,\ldots,n.
\label{eq:kl-alpha-fixed-point}
\end{equation}
\end{corollary}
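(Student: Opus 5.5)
The proof follows the general results directly: substitute the KL generator into \Cref{prop:empirical-dual-representation} and \Cref{cor:kkt-system}, then eliminate $\eta$ in closed form.

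\emph{Conjugate.} Compute $\phi^*(s)=\sup_{t\ge0}\{st-t\log t+t-1\}$. The objective is strictly concave on $(0,\infty)$ with derivative $s-\log t$, which vanishes at $t=e^s>0$. The value there is $se^s-se^s+e^s-1=e^s-1$. The endpoint $t=0$ gives $-1$, which is smaller, so $\phi^*(s)=e^s-1$. This function is differentiable with $(\phi^*)'(s)=e^s$. Inserting it into \eqref{eq:finite-sample-dual} gives \eqref{eq:kl-dual-with-eta} directly.

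\emph{Eliminating $\eta$.} Fix $\alpha$ with $Z_n(\alpha)<\infty$. Since $\exp((\avLik{\alpha}-\eta)/\varepsilon_n)=e^{-\eta/\varepsilon_n}\exp(\avLik{\alpha}/\varepsilon_n)$, the $\eta$-dependent part of the objective is
\[
  -\eta-\varepsilon_n e^{-\eta/\varepsilon_n}Z_n(\alpha)+\varepsilon_n .
\]
This is strictly concave in $\eta$. Its derivative $-1+e^{-\eta/\varepsilon_n}Z_n(\alpha)$ vanishes exactly at $\eta=\varepsilon_n\log Z_n(\alpha)$, which is \eqref{eq:kl-optimal-eta}. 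Substituting back, the exponential term becomes $\varepsilon_n\cdot 1$ and cancels the constant $+\varepsilon_n$, leaving $\frac1n\sum_i(1+\log\alpha_i)-\varepsilon_n\log Z_n(\alpha)$.

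If $Z_n(\alpha)=+\infty$, the integral equals $+\infty$ for every $\eta$, because it factors as $e^{-\eta/\varepsilon_n}Z_n(\alpha)$. The dual objective is then $-\infty$, so such $\alpha$ may be dropped. This gives \eqref{eq:kl-reduced-dual}.

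Note that $Z_n(\alpha)\ge \int e^0\,d\priordist\cdot$(positivity) $>0$, since $f_i\ge0$, so the logarithm is well defined. This is the one small point to check.

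\emph{Primal form.} Assume no duality gap and attainment. Because $\phi^*$ is differentiable everywhere, \eqref{eq:kkt-q-differentiable} applies and gives
\[
  q_n^\star=\exp\bigl((\avLik{\alpha^\star}-\eta^\star)/\varepsilon_n\bigr)\quad \priordist\text{-a.s.}
\]
The normalization \eqref{eq:kkt-normalization} then forces $e^{\eta^\star/\varepsilon_n}=Z_n(\alpha^\star)$. In particular $Z_n(\alpha^\star)<\infty$, and $\eta^\star$ agrees with \eqref{eq:kl-optimal-eta}. Together these yield \eqref{eq:kl-gibbs-density}.

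Finally, substituting this density into \eqref{eq:kkt-alpha} gives the fixed-point equations \eqref{eq:kl-alpha-fixed-point}, with $p_{Q_n^\star}(X_i)=\int f_i q_n^\star\,d\priordist$.

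No step is a genuine obstacle. The only care needed is the handling of $Z_n(\alpha)=\infty$ and the equivalence of the reduced and unreduced suprema; both follow from the factorization in $\eta$.
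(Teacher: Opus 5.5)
Your proposal is correct and follows essentially the same route as the paper. Like the paper, you substitute $\phi^*(s)=e^s-1$ into the general dual, eliminate $\eta$ through its first-order condition, and then specialize \eqref{eq:kkt-q-differentiable} and \eqref{eq:kkt-alpha}; your added care (the explicit conjugate computation, strict concavity in $\eta$, the $Z_n(\alpha)=\infty$ case, and identifying $\eta^\star$ from the normalization \eqref{eq:kkt-normalization} rather than by appeal to \eqref{eq:kl-optimal-eta}) refines the argument rather than changing it.
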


\begin{proof}
Substituting $\phi^*(s) = e^s - 1$ into \eqref{eq:finite-sample-dual} gives \eqref{eq:kl-dual-with-eta}.
For fixed $\alpha$, the $\eta$-dependent part of the dual objective is
\begin{multline*}
  -\eta
  -
  \varepsilon_n
  \int_\Theta
  \left[
  \exp\left(
  \frac{\avLik{\alpha}-\eta}{\varepsilon_n}
  \right)
  -1
  \right]
  \,d\priordist(\param) \\
  =-\eta-\varepsilon_n \exp\left(-\frac{\eta}{\varepsilon_n}\right)
  \int_\Theta\exp\left(\frac{\avLik{\alpha}}{\varepsilon_n}\right)d\priordist(\param)+\varepsilon_n.
\end{multline*}
Differentiating with respect to $\eta$ gives the first-order condition
\[
  -1
  +
  \exp\left(-\frac{\eta}{\varepsilon_n}\right)\int_\Theta
  \exp\left(
  \frac{\avLik{\alpha}}{\varepsilon_n}
  \right)
  d\priordist(\param)
  =
  0.
\]
Equivalently,
\[
  \exp\left(-\frac{\eta}{\varepsilon_n}\right)
  \int_\Theta
  \exp\left(
  \frac{\avLik{\alpha}}{\varepsilon_n}
  \right)
  d\priordist(\param)
  =
  1,
\]
which yields \eqref{eq:kl-optimal-eta}.
At this value of $\eta$,
\[
  \int_\Theta
  \left[
  \exp\left(
  \frac{\avLik{\alpha}-\eta(\alpha)}{\varepsilon_n}
  \right)
  -1
  \right]
  d\priordist(\param)
  =
  0,
\]
so \eqref{eq:kl-dual-with-eta} reduces to \eqref{eq:kl-reduced-dual}.
Finally, since $(\phi^*)'(s)=\exp(s)$,~\eqref{eq:kkt-q-differentiable} implies
\[
  q_n^\star(\param)
  =
  \exp\left(
  \frac{
  \avLik{\alpha^\star}-\eta^\star
  }{
  \varepsilon_n
  }
  \right),
\]
and hence \eqref{eq:kl-gibbs-density} is a consequence of \eqref{eq:kl-optimal-eta}.
The fixed-point equations \eqref{eq:kl-alpha-fixed-point} follow from \eqref{eq:kkt-alpha}.
\end{proof}

\subsection{Practical stability conditions}
\label{sec:practical_stability}

We next give verifiable sufficient conditions under which the value function \(\valfun\) is finite and continuous at the origin. By \Cref{prop:stable-strong-duality}, these conditions guarantee stability of the PrO problem; the subsequent remark describes common settings in which they are satisfied.

\begin{corollary}[Practical stability conditions]
\label[corollary]{cor:practical-stability}
Suppose that \Cref{ass:finite-sample-convex-setup} holds.
Assume that there exists $\bar q\in\dom\Phi$ such that
\[
  \int_\Theta \bar q\,d\priordist=1,
  \qquad
  \int_\Theta f_i\bar q\,d\priordist>0,
  \qquad
  i=1,\ldots,n,
\]
and such that
\[
  (1+r)\bar q\in\dom\Phi
\]
for every real number $r$ in a neighborhood of zero.
Suppose also that there exists $\alpha^0\in\mathbb R^n_{++}$ such that
\begin{equation}\label{eq:conj-fin}
\Phi^*\left(
\frac{\avLikPFree{\alpha^0}}{\varepsilon_n}
\right)
<
\infty.
\end{equation}
Then $\valfun$ is finite and continuous in a neighborhood of $(0,0)$.
Consequently, the PrO problem is stable: there is no duality gap and the supremum in the dual problem is attained.
If the infimum in the primal problem is also attained, the optimality conditions in Corollary~\ref{cor:kkt-system} hold.
\end{corollary}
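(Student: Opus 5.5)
The approach is to show that $\valfun$ is a convex function on $\mathbb R^{n+1}$ that is bounded above on a neighborhood of $(0,0)$ and nowhere equal to $-\infty$ near the origin. Since convex functions on finite-dimensional spaces that are finite and bounded above near a point are continuous there, \Cref{prop:stable-strong-duality} then yields stability. The attainment and optimality statements follow directly from \Cref{cor:kkt-system}.

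\emph{Convexity.} The function $\mathcal F_n(q,u,r)$ is jointly convex in $(q,u,r)$. Indeed, $g$ is convex and $(q,u)\mapsto T_nq+u$ is linear; $\Phi$ is convex; and the constraint $\int q\,d\priordist=1+r$ defines an affine graph. Hence the marginal function $\valfun$ is convex.

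\emph{Upper bound near the origin.} Given $(u,r)$ near zero, I will use the feasible point $q=(1+r)\bar q$, which lies in $\dom\Phi$ by hypothesis and satisfies $\int q\,d\priordist=1+r$. Set $c_i=\int f_i\bar q\,d\priordist>0$. Then
\[
(T_nq+u)_i=(1+r)c_i+u_i>0
\]
once $|r|<1/2$ and $|u_i|<c_i/4$. In that range,
\[
\valfun(u,r)\le g(T_nq+u)+\varepsilon_n\Phi((1+r)\bar q).
\]
The first term is bounded by $-\frac1n\sum_i\log(c_i/4)$. For the second, the map $r\mapsto\Phi((1+r)\bar q)$ is convex and finite on an open interval around $0$. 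It is therefore continuous on a smaller closed interval, and hence bounded there. So $\valfun$ is bounded above on a neighborhood $N$ of the origin.

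\emph{Lower bound.} This is the step I expect to be the main obstacle, and it is where \eqref{eq:conj-fin} enters: it rules out $\valfun=-\infty$, which would otherwise be possible because $-\log$ is unbounded below. By Fenchel--Young, for any feasible $q$ with $\int q\,d\priordist=1+r$,
\[
\varepsilon_n\Phi(q)\ge\int \avLikPFree{\alpha^0}q\,d\priordist-\varepsilon_n\Phi^*\bigl(\avLikPFree{\alpha^0}/\varepsilon_n\bigr)=\frac1n\sum_i\alpha^0_i(T_nq)_i-\varepsilon_n\Phi^*(\cdot).
\]
Writing $z=T_nq+u$, the weak-duality computation from the proof of \Cref{prop:empirical-dual-representation} gives
\[
g(z)+\frac1n\sum_i\alpha^0_i z_i\ge\frac1n\sum_i(1+\log\alpha^0_i).
\]
Combining the two displays,
\[
\mathcal F_n(q,u,r)\ge\frac1n\sum_i(1+\log\alpha^0_i)-\frac1n\sum_i\alpha^0_iu_i-\varepsilon_n\Phi^*\bigl(\avLikPFree{\alpha^0}/\varepsilon_n\bigr).
\]
The right-hand side is an affine minorant of $\valfun$ and is finite for every $(u,r)$. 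This is just the dual objective at $(\alpha^0,0)$ with the perturbation terms added. Thus $\valfun>-\infty$ everywhere, and it is finite on $N$.

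\emph{Conclusion.} A convex function on $\mathbb R^{n+1}$ that is finite on a neighborhood $N$ of the origin is continuous on the interior of $N$ (e.g.\ Ekeland--Temam, Chapter~I, Lemma~2.1, or simply finite-dimensional convex analysis). Applying \Cref{prop:stable-strong-duality} then gives zero duality gap and dual attainment. If the primal infimum is also attained, the hypotheses of \Cref{cor:kkt-system} hold, and the optimality conditions follow.
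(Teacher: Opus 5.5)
Your proposal is correct and follows essentially the same route as the paper. The feasible point $(1+r)\bar q$ bounds the value function above near the origin, and the coordinatewise log-conjugacy inequality combined with Fenchel--Young at $\alpha^0$ rules out the value $-\infty$. Finite-dimensional convexity then gives continuity before \Cref{prop:stable-strong-duality} is invoked. Your explicit check that the value function is convex, and your remark that the lower bound is a global affine minorant, are details the paper leaves implicit.
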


\begin{proof}
For $r$ sufficiently close to zero, define \(q_r=(1+r)\bar q\).
Then \(\int q_r\,d\priordist=1+r\).
Since every coordinate of \(T_n\bar q\) is strictly positive, we have for all \((u,r)\) sufficiently close to \((0,0)\) that
\[
T_nq_r+u
=
(1+r)T_n\bar q+u
\in\mathbb R^n_{++}.
\]
Next, since $q_r\in\dom\Phi$ for $r$ near zero, it follows that
\[
  \valfun(u,r)
  \le
  \varepsilon_n\Phi(q_r)+g(T_nq_r+u)
  <
  +\infty
\]
for $(u,r)$ near $(0,0)$.
It remains to rule out $\valfun(u,r)=-\infty$ locally.
Fix $\alpha^0\in\mathbb R^n_{++}$ satisfying \eqref{eq:conj-fin}.
By
\eqref{eq:negative-log-conjugacy} applied coordinatewise,
\[
  g(T_nq+u)
  \ge
  -\int \avLikPFree{\alpha^0}q\,d\priordist
  -\frac1n\sum_{i=1}^n\alpha_i^0u_i
  +
  \frac1n\sum_{i=1}^n(1+\log\alpha_i^0).
\]
Moreover, by the Fenchel--Young inequality,
\[
  \varepsilon_n\Phi(q)
  -\int \avLikPFree{\alpha^0}q\,d\priordist
  \ge
  -\varepsilon_n
  \Phi^*
  \left(
  \frac{\avLikPFree{\alpha^0}}{\varepsilon_n}
  \right)
  >
  -\infty.
\]
Adding the preceding two inequalities ensures $\valfun(u,r)>-\infty$ for $(u,r)$ near the origin.
We have shown that $\valfun$ is real-valued in a neighborhood of $(0,0)$.
Since $\valfun$ is convex on the finite-dimensional perturbation space, it is continuous in that neighborhood and subdifferentiable at the origin.
The conclusion then follows from \Cref{prop:stable-strong-duality}.
\end{proof}

\begin{remark}[Restrictiveness of the stability condition]
The condition in \Cref{cor:practical-stability} is mild in the finite-sample problems considered here.
If the reference predictive density is positive at every observed point, then one may take $\bar q\equiv1$:
\[
  \int f_i\,d\priordist
  =
  p_\priordist(X_i)>0,
  \qquad i=1,\ldots,n.
\]
The rescaling condition is then satisfied by the usual $\phi$-divergences, including KL.
For KL regularization, the conjugate-finiteness condition \eqref{eq:conj-fin} is equivalent to
\[
  Z_n(\alpha^0)
  =
  \int_\Theta
  \exp\left(
  \frac{\avLik{\alpha^0}}{\varepsilon_n}
  \right)
  d\priordist(\param)
  <
  \infty
\]
for at least one $\alpha^0\in\mathbb R^n_{++}$.
This condition is automatic for finite mixtures and for compact parameter spaces with bounded likelihood functions.
Unbounded likelihood functions require either an appropriate integrability condition or a restriction of the parameter space.
\end{remark}

\section{Predictive risk error}
\label{sec:predictive-convergence-rates}

Given observed data \(X_1,\ldots,X_n\), PrO inference is an optimization problem over mixing distributions on the parameter space.
Recall that \(\empPro\) denotes an approximate solution to this problem and \(P_{\empPro}\) the predictive distribution it induces.
In this section, we study the behavior of \(P_{\empPro}\) in the large-data limit \(n\to\infty\).
While \cite{McLatchieCheriefAbdellatifFrazierKnoblauch2025} establishes broad predictive guarantees for PrO posteriors and compares them with Bayes and Gibbs distributions, our analysis focuses on predictive risk.
Specifically, we consider:
\((i)\) the predictive risk of the PrO approximate solution, \(\postPredRisk\), defined as the expected log-loss of the density $p_{\empPro}$ under the data-generating distribution;
\((ii)\) the benchmark predictive risk \(\optPredRisk\) over admissible mixing distributions having finite divergence from the prior; and
\((iii)\) the excess predictive risk
\[
  \risk(P_Q)-\optPredRisk.
\]
Our goal is to determine how quickly \(\postPredRisk\) approaches \(\optPredRisk\).

We derive a bound that decomposes the excess predictive risk into four effects: finite-data fluctuations, approximation of the benchmark risk by finite-divergence mixing distributions, regularization, and numerical optimization.
Our analysis does not require the infimum defining \(\optPredRisk\) to be attained.

\subsection{Excess predictive risk}

For a predictive distribution \(P\) with density \(p\), define its predictive risk by
\[
  \mathcal R(P)
  \coloneqq
  -\datadist\log p
  =
  -\int_{\mathsf X} \log p(x) \,d\datadist(x).
\]
When the relevant quantities are finite, differences in predictive risk are equivalent to differences in forward KL divergence from the data-generating distribution, namely,
\begin{equation}
\label{eq:risk-kl-relation}
  \mathcal R(P)-\mathcal R(P')
  =
  \KL(\datadist\|P)-\KL(\datadist\|P').
\end{equation}

Recall the admissible mixture class $\mathcal Q_{\mathcal E}$ from \eqref{eq:admissible-mixing-class}.
To compare predictive distributions generated by mixing distributions at different levels of divergence from the prior, for $\rho\ge0$, define
\begin{equation*}
\label{eq:Q-rho-definition}
  \mathcal Q_\rho
  \coloneqq
  \left\{
  Q\in\mathcal Q_{\mathcal E}:
  D_\phi(Q\|\priordist)\le\rho
  \right\},
  \qquad
  \Qinf
  \coloneqq
  \bigcup_{\rho<\infty}\mathcal Q_\rho.
\end{equation*}
The parameter \(\rho\) is a divergence budget: increasing \(\rho\) enlarges the comparison class and may improve its predictive performance but can also increase the regularization term in the PrO problem.
We think of $\mathcal Q_\rho$ as comparison classes rather than alternative estimators.
Define the benchmark predictive risk over the finite-divergence mixing class as
\begin{equation*}
\label{eq:R-star}
  \optPredRisk
  \coloneqq
  \inf_{Q\in\Qinf}\mathcal R(P_Q),
\end{equation*}
and assume $\optPredRisk<\infty$.
Note that the infimum defining $\optPredRisk$ need not be attained by a mixing distribution in $\Qinf$.
To quantify how well mixing distributions with a controlled divergence from the prior can approximate the benchmark predictive risk, define the approximation profile
\begin{equation}
\label{eq:A-rho-definition}
  \accProf(\rho)
  \coloneqq
  \inf_{Q\in\mathcal Q_\rho}
  \left\{
  \mathcal R(P_Q)-\optPredRisk
  \right\}.
\end{equation}

Since the sets $\mathcal Q_\rho$ increase with $\rho$, the approximation profile $\accProf$ is nonincreasing.
Moreover, \(\accProf(\rho)\downarrow0\) as \(\rho\to\infty\).
This is because, for every $\eta>0$, there is a \(Q_\eta\) such that
\[
  \risk(P_{Q_\eta})-\optPredRisk<\eta,
\]
so taking $\rho\ge D_\phi(Q_\eta\|\priordist)$ implies \(\accProf(\rho)<\eta\).
If \(\optPredRisk = \mathcal{R}(P_{Q^\star})\) for some \(Q^*\in \Qinf\), then
\[
  \accProf(\rho)=0
  \qquad
  \text{for every}
  \qquad
  \rho\ge D_\phi(Q^\star\|\priordist).
\]
Thus, there is no approximation cost once the comparison class contains a mixing distribution that attains the benchmark predictive risk.
The approximation profile \(\accProf\) is needed for the more general case in which the benchmark is not attained in \(\Qinf\), but can be approached by mixing distributions with increasingly large divergence budgets.

To control the effect of replacing the integral that defines \(\mathcal R\) with its finite-data counterpart, define
\begin{equation*}
\label{eq:Delta-n-infty}
  \Delta_n^\infty
  \coloneqq
  \sup_{Q\in\Qinf}
  |(P_n -\datadist)\ell_Q|.
\end{equation*}
This quantity is the largest difference between the empirical and expected log-loss over the mixing distributions having finite divergence with respect to the prior.

\begin{theorem}[Excess predictive risk bound]
\label{prop:penalized-oracle}
Let $\empPro$ satisfy \eqref{eq:approximate-primal-optimality}.
Then
\begin{equation}
\label{eq:penalized-global-oracle}
  \postPredRisk-\optPredRisk
  \leq
  \postPredRisk-\optPredRisk
  +
  \varepsilon_nD_\phi(\empPro\|\priordist)
  \le
  2\Delta^\infty_n
  +
  \inf_{\rho\ge0}
  \left\{
  \accProf(\rho)+\varepsilon_n\rho
  \right\}
  +
  \xi_n.
\end{equation}
\end{theorem}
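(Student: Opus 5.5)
The first inequality in \eqref{eq:penalized-global-oracle} is immediate because $\varepsilon_n>0$ and $D_\phi\ge0$, since $\phi\ge0$. For the second inequality I will use the standard oracle-inequality argument: compare $\empPro$ with an arbitrary competitor $Q\in\mathcal Q_\rho$, passing from population risk to empirical loss and back, with each passage costing at most $\Delta_n^\infty$.

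Fix $\rho\ge0$ with $\mathcal Q_\rho\neq\emptyset$, and let $Q\in\mathcal Q_\rho$. The set is nonempty whenever the right-hand side is finite, so this is no restriction. We may assume $\Delta_n^\infty<\infty$ and $D_\phi(\empPro\|\priordist)<\infty$; otherwise the bound is trivial or the relevant case is excluded. In the second case $\empPro\in\Qinf$, since feasibility in \eqref{eq:approximate-primal-optimality} with a finite value forces finite divergence. Finiteness of $\operatorname{val}(\primeProb)$ follows from Assumption~\ref{ass:finite-sample-convex-setup}. Because $\empPro\in\Qinf$, the definition of $\Delta_n^\infty$ gives $\datadist\ell_{\empPro}\le P_n\ell_{\empPro}+\Delta_n^\infty$. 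Combining this with \eqref{eq:approximate-primal-optimality} yields
\[
\postPredRisk+\varepsilon_nD_\phi(\empPro\|\priordist)\le P_n\ell_{\empPro}+\varepsilon_nD_\phi(\empPro\|\priordist)+\Delta_n^\infty\le \operatorname{val}(\primeProb)+\xi_n+\Delta_n^\infty .
\]
Next, $\operatorname{val}(\primeProb)\le P_n\ell_Q+\varepsilon_nD_\phi(Q\|\priordist)\le P_n\ell_Q+\varepsilon_n\rho$. Since $Q\in\Qinf$, we also have $P_n\ell_Q\le \risk(P_Q)+\Delta_n^\infty$. Chaining these and subtracting $\optPredRisk$ gives
\[
\postPredRisk-\optPredRisk+\varepsilon_nD_\phi(\empPro\|\priordist)\le 2\Delta_n^\infty+\bigl\{\risk(P_Q)-\optPredRisk\bigr\}+\varepsilon_n\rho+\xi_n .
\]

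Taking the infimum over $Q\in\mathcal Q_\rho$ replaces the bracketed term by $\accProf(\rho)$, and then taking the infimum over $\rho\ge0$ gives \eqref{eq:penalized-global-oracle}.

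The only delicate point is bookkeeping of infinite values. Arithmetic with $\ell_Q$ requires $P_n\ell_Q$ and $\datadist\ell_Q$ to be finite, or at least the difference to be meaningful, which is implicit in the definition of $\Delta_n^\infty$. The estimator must also lie in $\Qinf$ before the uniform deviation bound can be applied to it. I would handle both by adopting the convention that the bound is vacuous whenever $\Delta_n^\infty=+\infty$, and by noting that $\risk(P_Q)\ge\optPredRisk>-\infty$ for $Q\in\Qinf$. Apart from this, the argument is a direct chain of inequalities.
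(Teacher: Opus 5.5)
Your proof is correct and is essentially the paper's argument: compare $\widehat Q_n$ with an arbitrary $Q\in\mathcal Q_\rho$ via approximate optimality, pay $\Delta_n^\infty$ once for each empirical-to-population swap, then take infima over $Q$ and then $\rho$. Routing through the primal optimal value instead of comparing objectives directly is only a reordering. Your explicit check that $\widehat Q_n\in\mathcal Q_\infty$ (a finite right-hand side in \eqref{eq:approximate-primal-optimality} forces finite divergence) fills in a step the paper simply asserts. That check needs only that the primal optimal value is $<+\infty$, which is what Assumption~\ref{ass:finite-sample-convex-setup} gives; it does not give full finiteness.
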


\begin{proof}
Fix $\rho <\infty$ and $Q\in\mathcal Q_\rho$.
Approximate optimality gives
\[
  P_n\ell_{\empPro}
  +
  \varepsilon_nD_\phi(\empPro\|\priordist)
  \le
  P_n\ell_Q
  +
  \varepsilon_nD_\phi(Q\|\priordist)
  +
  \xi_n.
\]
Adding $(\datadist-P_n)\ell_{\empPro}-\optPredRisk$ to both sides and using
$D_\phi(Q\|\priordist)\le\rho$ yields
\begin{multline*}
  \postPredRisk-\optPredRisk
  +
  \varepsilon_nD_\phi(\empPro\|\priordist)
  \\
  \le
  (\datadist-P_n)\ell_{\empPro}
  +(P_n-\datadist)\ell_Q
  +\mathcal R(P_Q)-\optPredRisk
  +\varepsilon_n\rho
  +\xi_n.
\end{multline*}
Since $\empPro, Q\in\Qinf$, both finite-data terms are bounded above by $\Delta_n^\infty$.
Therefore,
\begin{equation*}
  \postPredRisk-\optPredRisk
  +
  \varepsilon_nD_\phi(\empPro\|\priordist)
  \le
  2\Delta_n^\infty
  +\mathcal R(P_Q)-\optPredRisk
  +\varepsilon_n\rho
  +\xi_n.
\end{equation*}
Taking the infimum over $Q\in\mathcal{Q}_\rho$ gives
\begin{equation*}
  \postPredRisk-\optPredRisk
  +
  \varepsilon_nD_\phi(\empPro\|\priordist)
  \le
  2\Delta_n^\infty
  +\accProf(\rho)
  +\varepsilon_n\rho
  +\xi_n,
\end{equation*}
and taking the infimum over $\rho\ge 0$ gives
\eqref{eq:penalized-global-oracle}.
\end{proof}

The infimum term in \eqref{eq:penalized-global-oracle} balances approximation and regularization.
Increasing \(\rho\) enlarges the comparison class \(\mathcal Q_\rho\), so the approximation profile \(\accProf(\rho)\) cannot increase, while the upper bound \(\varepsilon_n\rho\) for the regularization becomes larger.
Thus, Theorem~\ref{prop:penalized-oracle} separates the excess predictive risk into four effects, grouped into three terms: the finite-data fluctuation term \(\Delta_n^\infty\), the approximation--regularization term, and the numerical optimization error \(\xi_n\).
The following corollary gives corresponding bounds on the rate of convergence and sufficient conditions under which the predictive risk converges to the benchmark predictive risk.

\begin{corollary}[Rates for excess predictive risk]
\label[corollary]{cor:penalized-rates}
Suppose \(\Delta_n^\infty=O_p(r_n)\) and \(\xi_n=O_p(s_n)\).
Then
\begin{equation*}
\label{eq:generic-rate-conclusion}
  \postPredRisk-\optPredRisk
  =
  O_p\!\left(
  r_n
  +
  \inf_{\rho\ge0}
  \{\accProf(\rho)+\varepsilon_n\rho\}
  +
  s_n
  \right).
\end{equation*}
In particular:
\begin{enumerate}
\item If the benchmark predictive risk \(\optPredRisk\) is attained by a mixing distribution $Q^\star\in\Qinf$, then the infimum term is at most $\varepsilon_nD_\phi(Q^\star\|\priordist)$.

\item If $\accProf(\rho)\le C\rho^{-\beta}$ for all sufficiently large $\rho$, where $\beta>0$, then
\[
  \inf_{\rho\ge0}
  \{\accProf(\rho)+\varepsilon_n\rho\}
  =
  O\!\left(\varepsilon_n^{\beta/(\beta+1)}\right).
\]

\item If $\accProf(\rho)\le Ce^{-a\rho}$ for all sufficiently large $\rho$, where $a>0$, then
\[
  \inf_{\rho\ge0}
  \{\accProf(\rho)+\varepsilon_n\rho\}
  =
  O\!\left(
  \varepsilon_n\log\frac1{\varepsilon_n}
  \right).
\]
\end{enumerate}
If $\Delta_n^\infty=o_p(1)$, $\xi_n=o_p(1)$, and $\varepsilon_n\to0$, then
\begin{equation}
\label{eq:predictive-risk-consistency}
  \postPredRisk\rightarrow\optPredRisk
  \qquad\text{in probability}.
\end{equation}
The same conclusions hold almost surely when the corresponding bounds on the finite-data and optimization terms hold almost surely.
\end{corollary}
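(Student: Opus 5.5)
The plan is to read everything off \Cref{prop:penalized-oracle}, whose right-hand side is deterministic given $\Delta_n^\infty$ and $\xi_n$. Since $\varepsilon_nD_\phi(\empPro\|\priordist)\ge0$, \eqref{eq:penalized-global-oracle} gives
\[
  0\le\postPredRisk-\optPredRisk\le 2\Delta_n^\infty+\inf_{\rho\ge0}\{\accProf(\rho)+\varepsilon_n\rho\}+\xi_n .
\]
The lower bound holds because $\empPro\in\Qinf$ by \eqref{eq:approximate-primal-optimality} and the finiteness of $\operatorname{val}(\primeProb)$. With $\Delta_n^\infty=O_p(r_n)$ and $\xi_n=O_p(s_n)$, the sum of $O_p$ terms plus a deterministic term is $O_p$ of the sum, which gives the generic rate. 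For the almost-sure version I would replace $O_p$ by $O$ pathwise, using the same inequality.

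For the special cases I will bound the deterministic infimum $I_n\coloneqq\inf_{\rho\ge0}\{\accProf(\rho)+\varepsilon_n\rho\}$ by a good choice of $\rho$.

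\begin{enumerate}
\item If $\optPredRisk$ is attained by $Q^\star\in\Qinf$, take $\rho=D_\phi(Q^\star\|\priordist)$. Then $\accProf(\rho)=0$ (noted after \eqref{eq:A-rho-definition}), so $I_n\le\varepsilon_nD_\phi(Q^\star\|\priordist)$.
\item If $\accProf(\rho)\le C\rho^{-\beta}$ for $\rho\ge\rho_0$, take $\rho_n=\varepsilon_n^{-1/(\beta+1)}$. This balances $C\rho^{-\beta}$ against $\varepsilon_n\rho$ and gives $I_n\le (C+1)\varepsilon_n^{\beta/(\beta+1)}$ once $\rho_n\ge\rho_0$.
\item If $\accProf(\rho)\le Ce^{-a\rho}$, take $\rho_n=a^{-1}\log(1/\varepsilon_n)$. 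Then $\accProf(\rho_n)\le C\varepsilon_n$ and $\varepsilon_n\rho_n=a^{-1}\varepsilon_n\log(1/\varepsilon_n)$, so $I_n=O(\varepsilon_n\log(1/\varepsilon_n))$.
\end{enumerate}

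The main obstacle is the \emph{sufficiently large} caveat in cases 2 and 3. The choices $\rho_n$ exceed $\rho_0$ only when $\varepsilon_n$ is small. I would treat these $O(\cdot)$ statements as asymptotic in $\varepsilon_n\to0$, which is the regime of interest. If $\varepsilon_n$ is bounded away from zero, I would instead use $\rho=0$, where $\accProf(0)\le\mathcal R(P_{\priordist})-\optPredRisk$ is finite whenever the prior predictive has finite risk; that bound is dominated by the rate up to constants. Since this is delicate, I would state the rate claims for $\varepsilon_n\to0$.

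For consistency, fix $\eta>0$ and use $\accProf(\rho)\downarrow0$ to choose $\rho_\eta$ with $\accProf(\rho_\eta)<\eta$. Then $I_n\le\eta+\varepsilon_n\rho_\eta$, so $\limsup_n I_n\le\eta$ as $\varepsilon_n\to0$, and hence $I_n\to0$. Combined with $\Delta_n^\infty,\xi_n=o_p(1)$ and the two-sided bound above, this gives $\postPredRisk-\optPredRisk\to0$ in probability, which is \eqref{eq:predictive-risk-consistency}. The almost-sure version follows pathwise in the same way.
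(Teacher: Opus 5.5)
Your proposal is correct and takes essentially the same route as the paper: everything is read off \eqref{eq:penalized-global-oracle}, the three special cases come from comparing with $Q^\star$ and choosing $\rho_n=\varepsilon_n^{-1/(\beta+1)}$ and $\rho_n=a^{-1}\log(1/\varepsilon_n)$, and consistency follows by fixing $\eta>0$ and a finite budget $\rho_\eta$ with $\accProf(\rho_\eta)<\eta$, so that the infimum term vanishes as $\varepsilon_n\to0$. Your explicit lower bound $\postPredRisk-\optPredRisk\ge0$ (from $\empPro\in\Qinf$) and your restriction of cases 2 and 3 to the regime $\varepsilon_n\to0$ make precise two points that the paper's proof leaves implicit.
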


\begin{proof}
The rate statement follows from \eqref{eq:penalized-global-oracle}.
The three specific bounds follow by comparing with $Q^\star$, by choosing $\rho$ of order $\varepsilon_n^{-1/(\beta+1)}$, and by choosing $\rho=a^{-1}\log(1/\varepsilon_n)$, respectively.
It remains to verify that the infimum term converges to zero whenever $\varepsilon_n\to0$.
Let $\eta>0$.
By the definition of $\optPredRisk$, there exists $Q_\eta\in\Qinf$ such that
\[
  \mathcal R(P_{Q_\eta})-\optPredRisk<\frac\eta2.
\]
Set $\rho_\eta=D_\phi(Q_\eta\|\priordist)<\infty$.  Since
$Q_\eta\in\mathcal Q_{\rho_\eta}$,
\[
  \accProf(\rho_\eta)<\frac\eta2.
\]
For all sufficiently large $n$, also
$\varepsilon_n\rho_\eta<\eta/2$.
Therefore
\[
  \inf_{\rho\ge0}
  \{\accProf(\rho)+\varepsilon_n\rho\}
  \le
  \accProf(\rho_\eta)+\varepsilon_n\rho_\eta
  <\eta.
\]
Since $\eta$ is arbitrary, the deterministic term converges to zero.  The
probability and almost-sure conclusions now follow from \eqref{eq:penalized-global-oracle}.
\end{proof}

When \eqref{eq:risk-kl-relation} holds, the convergence in
\eqref{eq:predictive-risk-consistency} is equivalently convergence of the
forward-KL value:
\[
  \KL(\datadist\|P_{\empPro})
  \rightarrow
  \inf_{Q\in\Qinf}\KL(\datadist\|P_Q).
\]
This statement concerns only the convergence of predictive distributions.
It does not imply the convergence of the mixing distributions $\empPro$.

\section{Examples}
\label{sec:examples}

The two examples below illustrate complementary regimes of the preceding
theory.
The first is a categorical example in which the benchmark predictive risk is not attained by any mixing distribution on the original parameter space.
We derive sharp approximation rates under KL and \(\chi^2\) regularization and show that they lead to different rate-optimal regularization schedules.
We then characterize the limiting behavior of the predictive and mixing distributions.
Finally, combining these results with control of the finite-data fluctuation term yields end-to-end excess-risk
bounds through \Cref{prop:penalized-oracle}.
The second example considers a misspecified Gaussian location-mixture model.
After replacing the continuous parameter space by a fixed grid, we specialize \Cref{prop:penalized-oracle} to obtain an end-to-end excess-risk bound for the discretized model.
We then use the reduced KL dual problem and its optimality conditions to compute numerical solutions and assess their accuracy.

\subsection{Categorical example}
\label{sec:theoretical-example}

Let $\mathsf X=\{1,2,3\}$, $\Theta=(-1,1)$, and let $\priordist$ be uniform on $(-1,1)$.
Since \(\mathsf X\) is finite, we identify each probability distribution on \(\mathsf X\) with its probability mass vector, with coordinates ordered as (1,2,3).
For $\param\in(-1,1)$, define
\begin{equation*}
\label{eq:categorical-components}
  p_\param
  \coloneqq
  \left(
  \frac{3-\param^2+\param}{8},
  \frac{3-\param^2-\param}{8},
  \frac{1+\param^2}{4}
  \right),
\end{equation*}
and take
\begin{equation*}
\label{eq:categorical-truth}
  \datadist
  \coloneqq
  \left(\frac16,\frac16,\frac23\right).
\end{equation*}
For a mixing distribution $Q$, write
\[
  m_1(Q)=\int\param\,dQ(\param),
  \qquad
  m_2(Q)=\int\param^2\,dQ(\param).
\]
Then
\begin{equation}
\label{eq:categorical-mixture-moments}
  P_Q
  =
  \left(
  \frac{3-m_2(Q)+m_1(Q)}8,
  \frac{3-m_2(Q)-m_1(Q)}8,
  \frac{1+m_2(Q)}4
  \right).
\end{equation}
Since $m_2(Q)<1$ for every mixing distribution on $(-1,1)$, $P_Q(3)<1/2$, whereas $\datadist(3)=2/3$.
Thus, the data-generating distribution is not represented by any mixing distribution.
Let
\begin{equation*}
\label{eq:categorical-Pdagger}
  P^\dagger
  \coloneqq
  \left(\frac14,\frac14,\frac12\right).
\end{equation*}
To describe the relationship between $P^\dagger$ and the model class, we equip the probability distributions on $\mathsf X$ with the total-variation distance,
\begin{equation*}
\label{eq:tv-definition}
  \|P-P'\|_{\mathrm{TV}}
  \coloneqq
  \sup_{A\in\mathcal X}|P(A)-P'(A)|
  =
  \frac12\sum_{x\in\mathsf X}|P(x)-P'(x)|.
\end{equation*}
Let \(\mathcal M_\infty=\{P_Q:Q\in\Qinf\}\) and denote its closure in the total variation distance by \(\overline{\mathcal M}_\infty\).
The comparison class \(\Qinf\), and therefore potentially the benchmark predictive risk \(\optPredRisk\), depends on the choice of divergence.
In the present example, however, the KL and \(\chi^2\) divergences yield the same benchmark predictive risk and the same limiting predictive distribution.
The following proposition makes this precise.

\begin{proposition}[\(\optPredRisk\) not attained by a mixing distribution]
\label[proposition]{prop:categorical-oracle}
For each of the two choices
\[
  D_\phi=D_{\mathrm{KL}}
  \qquad\text{and}\qquad
  D_\phi=D_{\chi^2},
\]
the distribution $P^\dagger$ belongs to
$\overline{\mathcal M}_\infty$ and
\begin{equation}
\label{eq:categorical-risk-chain}
  \mathcal R(\datadist)
  <
  \mathcal R(P^\dagger)
  =
  \optPredRisk
  <
  \inf_{\param\in(-1,1)}\mathcal R(P_\param),
\end{equation}
but \(P^\dagger\) is not attained by any mixing
distribution on \((-1,1)\).
After extending the parameter space to \([-1,1]\), the unique mixing distribution representing \(P^\dagger\) is
\begin{equation}
\label{eq:categorical-Qdagger}
  Q^\dagger
  =
  \frac12\delta_{-1}+\frac12\delta_1.
\end{equation}
This distribution is singular with respect to the continuous prior and therefore has infinite KL and $\chi^2$ divergence from $\priordist$.
\end{proposition}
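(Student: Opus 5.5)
The plan is to reduce everything to the two moments $(m_1,m_2)=(m_1(Q),m_2(Q))$. By \eqref{eq:categorical-mixture-moments}, $\mathcal R(P_Q)=F(m_1(Q),m_2(Q))$, where
\[
  F(m_1,m_2)\coloneqq-\tfrac16\log\tfrac{3-m_2+m_1}{8}-\tfrac16\log\tfrac{3-m_2-m_1}{8}-\tfrac23\log\tfrac{1+m_2}{4}.
\]
For any mixing distribution on $[-1,1]$, the moment pair lies in $K\coloneqq\{(m_1,m_2):m_1^2\le m_2\le1\}$. I will minimize $F$ over the compact set $K$ and then decide which points of $K$ are realized by $\Qinf$, by all distributions on $(-1,1)$, or only by distributions on $[-1,1]$.

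\emph{Step 1 (minimizing $F$ over $K$).} For fixed $m_2$, $F$ is strictly convex and even in $m_1$, since it is a sum of $-\log$ of affine functions symmetric under $m_1\mapsto-m_1$. Hence $F(m_1,m_2)\ge F(0,m_2)$, with equality only at $m_1=0$. Next,
\[
  \frac{d}{dm_2}F(0,m_2)=\frac{1/3}{3-m_2}-\frac{2/3}{1+m_2},
\]
which is negative whenever $1+m_2<2(3-m_2)$, that is, for $m_2<5/3$. So $F(0,\cdot)$ is strictly decreasing on $[0,1]$. The unique minimizer of $F$ over $K$ is therefore $(0,1)$, and the corresponding predictive distribution is $P^\dagger$. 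Because every $Q$ on $(-1,1)$ has $m_2(Q)<1$, we get $\mathcal R(P_Q)>\mathcal R(P^\dagger)$ for all such $Q$. This gives both the lower bound $\optPredRisk\ge\mathcal R(P^\dagger)$ and nonattainment.

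\emph{Step 2 (approximating sequence with finite divergence).} Take $Q_k$ with density $dQ_k/d\priordist=k\,\chi_{A_k}$, where $A_k=(-1,-1+1/k)\cup(1-1/k,1)$ has $\priordist$-mass $1/k$. Then $D_{\mathrm{KL}}(Q_k\|\priordist)=\log k$ and $D_{\chi^2}(Q_k\|\priordist)=k-1$, both finite, so $Q_k\in\Qinf$ for both divergences. Here I assume, as in the categorical setting, that bounded densities are admissible in $\mathcal E$. By symmetry $m_1(Q_k)=0$, and $m_2(Q_k)\to1$. Since the map from moments to the predictive distribution is affine, $P_{Q_k}\to P^\dagger$ in total variation, so $P^\dagger\in\overline{\mathcal M}_\infty$. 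By continuity of $F$ near $(0,1)$, $\mathcal R(P_{Q_k})\to\mathcal R(P^\dagger)$. Combined with Step 1, this gives $\optPredRisk=\mathcal R(P^\dagger)$.

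\emph{Step 3 (the strict inequalities in \eqref{eq:categorical-risk-chain}).} The left inequality follows from Gibbs' inequality, since $\mathcal R(P)-\mathcal R(\datadist)=\KL(\datadist\|P)>0$ for $P\ne\datadist$. For the right inequality, note that $\mathcal R(P_\param)=F(\param,\param^2)$ extends continuously to the compact interval $[-1,1]$. At each point of $[-1,1]$ its value strictly exceeds $F(0,1)$:
\begin{itemize}
  \item for $|\param|<1$, because $\param^2<1$ and Step 1 applies;
  \item for $\param=\pm1$, because the moment pair is $(\pm1,1)\neq(0,1)$ and $(0,1)$ is the unique minimizer.
\end{itemize}
A continuous function on a compact set that is everywhere strictly above $F(0,1)$ has its minimum strictly above $F(0,1)$, so $\inf_{\param\in(-1,1)}\mathcal R(P_\param)>\mathcal R(P^\dagger)$.

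\emph{Step 4 (uniqueness and singularity of $Q^\dagger$).} A distribution on $[-1,1]$ represents $P^\dagger$ if and only if $m_1=0$ and $m_2=1$, because the moment-to-distribution map is injective. The condition $\int(1-\param^2)\,dQ=0$ forces $Q$ to be supported on $\{-1,1\}$, and $m_1=0$ then forces equal weights, giving \eqref{eq:categorical-Qdagger}. This $Q^\dagger$ is concentrated on a $\priordist$-null set, so $Q^\dagger\not\ll\priordist$, and both divergences are $+\infty$ by definition.

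The main obstacle is the bookkeeping in Steps 1 and 3. Step 1 needs the monotonicity and symmetry argument to give strict inequality at every feasible moment pair. Step 3 needs the compactness argument to handle the parameter endpoints correctly. Everything else is direct computation.
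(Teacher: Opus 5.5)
Your proof is correct, and Steps 2 and 4 match the paper's. Your $Q_k$ is its $Q_\delta$ with $\delta=1/k$, and the admissibility of bounded densities you assume follows from decomposability of $\mathcal E$. The real differences are in Steps 1 and 3.

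For the lower bound $\mathcal R(P_Q)>\mathcal R(P^\dagger)$, the paper does not minimize over the moment set. It notes that $\sum_{x}\frac{\datadist(x)}{P^\dagger(x)}P_Q(x)=\frac56+\frac{m_2(Q)}{6}$ and applies $\log u\ge1-1/u$. This gives $\KL(\datadist\|P_Q)-\KL(\datadist\|P^\dagger)\ge\frac16\{1-m_2(Q)\}$ for every $Q$. The identity is the first-order optimality certificate for the reverse information projection onto the mixture class. It needs no symmetrization in $m_1$, and it is quantitative: it is the same bound linear in $1-m_2(Q)$ that the appendix later recovers via symmetrization for the $\chi^2$ lower bound.

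Your two-stage argument uses strict convexity and evenness in $m_1$, then monotonicity of $F(0,\cdot)$ on $[0,1]$. It is more elementary. It also shows explicitly that $(0,1)$ is the unique minimizer of $F$ over the whole compact moment set $K$, boundary included. That uniqueness is exactly what the proof of Proposition~\ref{prop:categorical-boundary-convergence} later invokes.

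For the right inequality, the paper appeals to a direct calculation giving the explicit gap $\frac16\log\frac43$, approached as $\param\to\pm1$. That calculation tacitly requires checking that $\mathcal R(P_\param)$ decreases in $|\param|$. Your compactness argument on $[-1,1]$, combined with uniqueness of the minimizer, gives strictness without that check, at the cost of not identifying the value of the gap.
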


\begin{proof}
For $0<\delta<1$, let $Q_\delta$ be the distribution with one half of its mass placed uniformly on $(-1,-1+\delta)$ and the other half placed uniformly on $(1-\delta,1)$.
The density of this distribution with respect to $\priordist$ is $\delta^{-1}$ on these two intervals and zero elsewhere, so it has finite KL and $\chi^2$ divergence.
Symmetry gives
\[
  m_1(Q_\delta)=0,
  \qquad
  m_2(Q_\delta)=1-\delta+\frac{\delta^2}{3}.
\]
Consequently,
\begin{equation}
\label{eq:categorical-Qdelta-predictive}
  P_{Q_\delta}
  =
  \left(
  \frac14+\frac\delta8-\frac{\delta^2}{24},
  \frac14+\frac\delta8-\frac{\delta^2}{24},
  \frac12-\frac\delta4+\frac{\delta^2}{12}
  \right)
  \longrightarrow P^\dagger
\end{equation}
in total variation as $\delta\to 0$.
Hence $P^\dagger\in\overline{\mathcal M}_\infty$.
To show that $P^\dagger$ achieves the benchmark predictive risk \(\optPredRisk\), observe that
\[
  \sum_{x=1}^3
  \frac{\datadist(x)}{P^\dagger(x)}P_Q(x)
  =
  \frac56+\frac{m_2(Q)}6
  <1
\]
for every mixing distribution \(Q\) on \((-1,1)\). Hence, using \(\log u\ge 1-1/u\),
\begin{align*}
  \KL(\datadist\|P_Q)-\KL(\datadist\|P^\dagger)
  =
  \sum_{x=1}^3
  \datadist(x)\log\frac{P^\dagger(x)}{P_Q(x)}
  \ge
  1-
  \sum_{x=1}^3
  \frac{\datadist(x)}{P^\dagger(x)}P_Q(x)
  >0.
\end{align*}
Since \(P_{Q_\delta}\to P^\dagger\) and the KL divergence is continuous at
\(P^\dagger\),
\[
  \inf_{Q\in\Qinf}\KL(\datadist\|P_Q)
  =
  \KL(\datadist\|P^\dagger)
  \quad
  \iff
  \quad
  \optPredRisk = \mathcal{R}(P^\dagger),
\]
but the infimum is not attained by any $Q\in\Qinf$.
Since $\datadist\neq P^\dagger$, the strict positivity of the KL divergence implies \( \mathcal R(\datadist) < \mathcal R(P^\dagger) \), which is the leftmost inequality in \eqref{eq:categorical-risk-chain}.
A direct calculation shows
\begin{equation*}
\label{eq:component-gap}
  \inf_{\param\in(-1,1)}
  \left\{
  \mathcal R(P_\param)-\mathcal R(P^\dagger)
  \right\}
  =
  \frac16\log\frac43
  >0,
\end{equation*}
where the infimum is approached as \(\param\to\pm1\).
This proves the rightmost inequality in \eqref{eq:categorical-risk-chain}.
Finally, $P_Q=P^\dagger$ would require $m_1(Q)=0$ and $m_2(Q)=1$, which is impossible
on $(-1,1)$.
On $[-1,1]$, the condition $m_2(Q)=1$ forces all of the mass onto the two endpoints, and $m_1(Q)=0$ forces equal masses, establishing \eqref{eq:categorical-Qdagger}.
\end{proof}

The two inequalities in \eqref{eq:categorical-risk-chain} illustrate distinct features of prediction under model misspecification.
First, the benchmark predictive risk is strictly smaller than the predictive risk of every individual model \(P_\param\).
Second, the data-generating distribution is not represented by the predictive mixture class.
The first phenomenon is termed ``non-trivial misspecification'' in \cite{McLatchieCheriefAbdellatifFrazierKnoblauch2025}, whereas ``convex recovery'' refers to the stronger property that the data-generating distribution belongs to the predictive mixture class.
Thus, this example exhibits non-trivial misspecification without convex recovery.

The predictive convergence guarantee for non-trivial misspecification in \cite{McLatchieCheriefAbdellatifFrazierKnoblauch2025} does not require convex recovery, but it assumes the existence of a mixing distribution having finite KL divergence from the prior and whose induced predictive distribution attains the benchmark predictive risk \(\optPredRisk\).
The present example falls outside this setting.
Extending the parameter space to \([-1,1]\) allows \(P^\dagger\) to be represented by the \(Q^\dagger\) in
\eqref{eq:categorical-Qdagger}.
Since \(Q^\dagger\) is singular with respect to the continuous prior, it has infinite KL and \(\chi^2\) divergence.
Thus, the extension provides a representation of the predictive limit without changing \(\mathcal Q_\infty\).

\subsubsection{Divergence-specific approximation rates}
\label{subsec:divergence-accessibility}

For the sequence of mixing distributions \(Q_\delta\) constructed above,
\begin{equation*}
\label{eq:Qdelta-divergences}
  D_{\mathrm{KL}}(Q_\delta\|\priordist)
  =
  \log(1/\delta)
  \quad
  \text{and}
  \quad
  D_{\chi^2}(Q_\delta\|\priordist)
  \coloneqq \int
  \left(
  \frac{dQ_\delta}{d\priordist}-1
  \right)^2
  d\priordist =
  \delta^{-1}-1.
\end{equation*}
Using \eqref{eq:categorical-Qdelta-predictive}, the corresponding excess predictive risk is
\begin{equation*}
  \mathcal R(P_{Q_\delta})-\mathcal R(P^\dagger)
  =
  -\frac13
  \log\left(
  1+\frac{\delta}{2}-\frac{\delta^2}{6}
  \right)
  -\frac23
  \log\left(
  1-\frac{\delta}{2}+\frac{\delta^2}{6}
  \right)
  =
  \frac{\delta}{6}+O(\delta^2),
\label{eq:Qdelta-risk-expansion}
\end{equation*}
where the final equality follows from a Taylor expansion at \(\delta=0\).
Choosing \(\delta=e^{-\rho}\) for KL divergence and \(\delta=(1+\rho)^{-1}\) for \(\chi^2\) divergence gives, for all sufficiently large \(\rho\),
\begin{equation}
\label{eq:elementary-accessibility-bounds}
  \accProfKL(\rho)
  \le
  Ce^{-\rho}
  \quad
  \text{and}
  \quad
  \accProfChi(\rho)
  \le
  \frac{C}{1+\rho},
\end{equation}
where \(\accProfKL\) and \(\accProfChi\) denote the approximation profiles defined using KL and \(\chi^2\) divergence, respectively.
The next result shows that these upper bounds are sharp and, in the KL case, identifies the leading constant.

\begin{proposition}[Sharp approximation rates]
\label[proposition]{prop:sharp-accessibility}
For the KL divergence,
\begin{equation}
\label{eq:sharp-KL-accessibility}
  \lim_{\rho\to\infty}
  e^\rho \accProfKL(\rho)
  =
  \frac{1}{3e}.
\end{equation}
For the \(\chi^2\) divergence, there exist constants \(0<c<C<\infty\) and \(\rho_0>0\) such that
\begin{equation}
\label{eq:sharp-pearson-accessibility}
  \frac{c}{\rho}
  \le
  \accProfChi(\rho)
  \le
  \frac{C}{\rho}
  \quad
  \text{for every}
  \quad
  \rho\ge\rho_0.
\end{equation}
\end{proposition}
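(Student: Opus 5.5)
The plan is to reduce both statements to a one-dimensional moment problem. By \eqref{eq:categorical-mixture-moments}, the excess risk $\mathcal R(P_Q)-\mathcal R(P^\dagger)=\sum_x\datadist(x)\log\{P^\dagger(x)/P_Q(x)\}$ depends on $Q$ only through $(m_1(Q),m_2(Q))$. Its $m_1$-dependent part is
\[
-\tfrac16\log(3-m_2+m_1)-\tfrac16\log(3-m_2-m_1),
\]
which is even and convex in $m_1$. It is therefore minimized at $m_1=0$, and the minimum value is a lower bound for every $Q$.

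Write $w(Q)\coloneqq\int(1-\param^2)\,dQ=1-m_2(Q)\in(0,1]$. Then
\[
\mathcal R(P_Q)-\mathcal R(P^\dagger)\ \ge\ g(w(Q)),\qquad g(w)\coloneqq-\tfrac13\log\Bigl(1+\tfrac w2\Bigr)-\tfrac23\log\Bigl(1-\tfrac w2\Bigr),
\]
with equality when $m_1(Q)=0$. We have $g'(w)>0$ on $[0,1]$ and $g(w)=w/6+O(w^2)$. Since the KL and $\chi^2$ divergences are unchanged by the reflection $\param\mapsto-\param$, and both they and $w$ are convex/linear in $Q$, symmetrizing a near-optimal $Q$ keeps it feasible and makes $m_1=0$. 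Hence
\[
\accProf(\rho)=g\bigl(w^*(\rho)\bigr),\qquad w^*(\rho)\coloneqq\inf_{Q\in\mathcal Q_\rho}\int h\,dQ,\quad h(\param)\coloneqq1-\param^2 .
\]
Because $w^*(\rho)\to0$, it suffices to show $e^\rho w^*_{\mathrm{KL}}(\rho)\to 2/e$ (which gives $1/(3e)$ after multiplying by $g'(0)=1/6$) and $w^*_{\chi^2}(\rho)\asymp1/\rho$.

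\emph{KL case.} For every $\lambda>0$, the Donsker--Varadhan variational formula gives
\[
\int h\,dQ\ \ge\ \frac{-\log Z(\lambda)-D_{\mathrm{KL}}(Q\|\priordist)}{\lambda},\qquad Z(\lambda)\coloneqq\int e^{-\lambda h}\,d\priordist .
\]
A Laplace argument near the endpoints, using $h(\param)\approx2(1\mp\param)$ and the prior density $1/2$, gives $Z(\lambda)=\{1+o(1)\}/(2\lambda)$ as $\lambda\to\infty$. The lower bound then becomes
\[
w^*(\rho)\ \ge\ \sup_\lambda\frac{\log(2\lambda)-\rho+o(1)}{\lambda}.
\]
The supremum is approximately attained at $2\lambda=e^{\rho+1}$, which yields $w^*(\rho)\ge\{2+o(1)\}e^{-\rho-1}$.

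For the matching upper bound I will use the Gibbs distribution $dQ_\lambda\propto e^{-\lambda h}\,d\priordist$. For this family, equality holds in Donsker--Varadhan, and
\[
D_{\mathrm{KL}}(Q_\lambda\|\priordist)=-\lambda\int h\,dQ_\lambda-\log Z(\lambda)=\log(2\lambda)-1+o(1),
\qquad
\int h\,dQ_\lambda=\{1+o(1)\}/\lambda .
\]
Choosing $\lambda$ so that the divergence equals $\rho$, which is possible by continuity and monotonicity in $\lambda$, gives $\int h\,dQ_\lambda=\{2+o(1)\}e^{-\rho-1}$. The main obstacle is making these Laplace asymptotics precise with $o(1)$ errors, including the first-moment asymptotic $\int h e^{-\lambda h}\,d\priordist\sim 1/(2\lambda^2)$. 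I expect to handle this by substituting $s=1-\param^2$ explicitly, so that $d\priordist=ds/(2\sqrt{1-s})$ on each half, and applying dominated convergence.

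\emph{$\chi^2$ case.} The upper bound is already \eqref{eq:elementary-accessibility-bounds}. For the lower bound, fix $t\in(0,1)$. Let $q=dQ/d\priordist$ and note $\int q^2\,d\priordist=1+D_{\chi^2}(Q\|\priordist)\le1+\rho$. By Cauchy--Schwarz,
\[
Q(h<t)\ \le\ \priordist(h<t)^{1/2}(1+\rho)^{1/2}\ \le\ \{t(1+\rho)\}^{1/2},
\]
using $\priordist(h<t)=1-\sqrt{1-t}\le t$. Hence
\[
\int h\,dQ\ \ge\ t\bigl(1-\{t(1+\rho)\}^{1/2}\bigr).
\]
Taking $t=1/\{4(1+\rho)\}$ gives $w^*(\rho)\ge 1/\{8(1+\rho)\}$. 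Since $g(w)\ge w/6$ for small $w$ (indeed $g(w)\ge cw$ on $[0,1]$), this yields $c/\rho\le\accProfChi(\rho)$ for $\rho\ge\rho_0$, which completes \eqref{eq:sharp-pearson-accessibility}.
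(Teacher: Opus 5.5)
Your argument is correct and follows essentially the same route as the paper. Your tilt $e^{-\lambda(1-\theta^2)}$ is the paper's $Q_\lambda\propto e^{\lambda\theta^2}$, and your Donsker--Varadhan bound is the paper's inequality $D_{\mathrm{KL}}(Q\|Q_\lambda)\ge 0$; the paper evaluates it at the same $\lambda$ as the upper bound to show $Q_\lambda$ is exactly optimal at radius $\rho_\lambda$, whereas you optimize over $\lambda$ separately, which is equally valid. The endpoint asymptotics coincide, and your level-set Cauchy--Schwarz argument reproduces the paper's $\chi^2$ bound $1-m_2(Q)\ge 1/\{8(1+\rho)\}$. One cosmetic slip: $ds/\{2\sqrt{1-s}\}$ is the law of $1-\theta^2$ under the whole prior (each half contributes $ds/\{4\sqrt{1-s}\}$), and the dominated-convergence step needs a split away from $s=1$; neither affects the conclusion.
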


The proof is deferred to \Cref{app:sharp-accessibility} in the appendix.
The divergence-specific approximation rates in \Cref{prop:sharp-accessibility} lead to different approximation--regularization trade-offs.
By
\Cref{cor:penalized-rates} and \Cref{prop:sharp-accessibility},
\begin{equation*}
\label{eq:KL-Pearson-tradeoffs}
  \inf_{\rho\ge0}
  \{\accProfKL(\rho)+\varepsilon\rho\}
  =
  O\!\left(\varepsilon\log\frac{1}{\varepsilon}\right)
  \quad\text{and}\quad
  \inf_{\rho\ge0}
  \{\accProfChi(\rho)+\varepsilon\rho\}
  =
  O(\varepsilon^{1/2}).
\end{equation*}
Balancing a finite-data contribution of order \(n^{-1/2}\) with the deterministic approximation--regularization term gives
\begin{equation}
\label{eq:divergence-specific-epsilon}
  \varepsilon_n
  =
  \frac{1}{\sqrt n\log n}
  \quad\text{for KL}
  \quad
  \text{and}
  \quad
  \varepsilon_n
  =
  \frac1n
  \quad\text{for \(\chi^2\)}.
\end{equation}
The two regularization schedules reflect the different costs of concentrating mass on sets having small prior probability: the KL divergence grows logarithmically in the reciprocal prior mass, whereas the \(\chi^2\) divergence grows at a reciprocal-mass rate.
More generally, the approximation rate is determined jointly by the prior mass of parameter regions that support mixing distributions with near-benchmark predictive risk and by the cost, under the chosen divergence, of concentrating on those regions.

\subsubsection{Predictive and mixing-distribution limits}
\label{subsec:catpredconvergence}

We next show that for the categorical example, convergence of the predictive risk has different implications for the predictive and mixing distributions.
The predictive distributions converge to \(P^\dagger\), whereas the mixing distributions move their mass toward the boundary of the parameter space and have no admissible weak limit.

\begin{proposition}[Predictive convergence and boundary concentration]
\label[proposition]{prop:categorical-boundary-convergence}
Let \(\{Q_n\}\) be any sequence of mixing distributions on $(-1,1)$ such that
\(
  \mathcal R(P_{Q_n})\rightarrow\mathcal R(P^\dagger).
\)
Then
\begin{equation}
\label{eq:categorical-predictive-TV}
  \|P_{Q_n}-P^\dagger\|_{\mathrm{TV}}
  \rightarrow0,
\end{equation}
whereas for every $\eta\in(0,1)$,
\begin{equation}
\label{eq:boundary-concentration}
  Q_n\{\,|\param|\le1-\eta\,\}
  \rightarrow0.
\end{equation}
Thus, \(\{Q_n\}\) has no weak limit that is a probability distribution on \((-1,1)\).
If instead each \(Q_n\) is regarded as a probability distribution on \([-1,1]\), then
\begin{equation}
\label{eq:compactified-weak-limit}
  Q_n
  \rightsquigarrow
  \frac12\delta_{-1}+\frac12\delta_1.
\end{equation}
\end{proposition}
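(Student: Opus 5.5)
The plan is to reduce everything to the two moments $m_1(Q_n)$ and $m_2(Q_n)$, using the inequality already established in the proof of \Cref{prop:categorical-oracle}. For any mixing distribution $Q$ on $(-1,1)$, that proof gives
\[
  \mathcal R(P_Q)-\mathcal R(P^\dagger)
  =\KL(\datadist\|P_Q)-\KL(\datadist\|P^\dagger)
  \ge 1-\sum_{x=1}^3\frac{\datadist(x)}{P^\dagger(x)}P_Q(x)
  =\frac{1-m_2(Q)}{6}\ge 0 .
\]
If $\mathcal R(P_{Q_n})\to\mathcal R(P^\dagger)$, this yields $m_2(Q_n)\to1$ immediately.

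Next I need $m_1(Q_n)\to0$. I would not hunt for a second Gibbs-type inequality. Instead I would argue by compactness of the moment pairs: $(m_1(Q_n),m_2(Q_n))$ lies in the compact set $\{(a,b): a^2\le b\le 1\}$. Along any subsequence with limit $(a,1)$, the predictive vectors in \eqref{eq:categorical-mixture-moments} converge to
\[
  \bar P=\left(\tfrac{2+a}{8},\tfrac{2-a}{8},\tfrac12\right),
\]
which has all coordinates positive because $|a|\le1$. Continuity of $P\mapsto-\datadist\log P$ at strictly positive vectors then gives $\mathcal R(\bar P)=\mathcal R(P^\dagger)$. A direct computation gives
\[
  \mathcal R(\bar P)-\mathcal R(P^\dagger)=-\tfrac16\log\bigl(1-a^2/4\bigr),
\]
so $a=0$. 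Hence every subsequential limit of $(m_1,m_2)$ is $(0,1)$. Since $P_Q$ depends affinely and continuously on $(m_1,m_2)$, this gives \eqref{eq:categorical-predictive-TV}. The step needing the most care is this one: passing from risk convergence to moment convergence without assuming a priori that the predictive distributions converge. The compactness-plus-continuity argument avoids that, because the third coordinate stays bounded away from zero.

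For \eqref{eq:boundary-concentration}, I would use the bound
\[
  1-m_2(Q_n)=\int(1-\param^2)\,dQ_n\ge\bigl(1-(1-\eta)^2\bigr)\,Q_n\{|\param|\le1-\eta\},
\]
which forces $Q_n\{|\param|\le1-\eta\}\to0$. Non-existence of a weak limit on $(-1,1)$ follows as well. If $Q_n\rightsquigarrow Q$ on $(-1,1)$, the portmanteau theorem applied to the open set $\{|\param|<1-\eta\}$ gives $Q\{|\param|<1-\eta\}\le\liminf_n Q_n\{|\param|<1-\eta\}=0$ for every $\eta$. Then $Q\equiv0$, which is a contradiction.

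Finally, for \eqref{eq:compactified-weak-limit}, I would view the $Q_n$ on the compact space $[-1,1]$, where they are tight, so Prokhorov's theorem gives weakly convergent subsequences. Let $Q$ be any subsequential limit. Since $\param$ and $\param^2$ are bounded and continuous on $[-1,1]$, we get $m_1(Q)=0$ and $m_2(Q)=1$. By the uniqueness argument at the end of the proof of \Cref{prop:categorical-oracle}, $Q=\tfrac12\delta_{-1}+\tfrac12\delta_1$. Since every subsequential limit is this same distribution, the whole sequence converges to it.
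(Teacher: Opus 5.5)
Your proof is correct and follows essentially the same route as the paper: compactness of the moment pairs $(m_1(Q_n),m_2(Q_n))$, the bound $1-m_2(Q_n)\ge\bigl(1-(1-\eta)^2\bigr)Q_n\{|\theta|\le1-\eta\}$, and identification of subsequential limits on $[-1,1]$ through $m_1=0$ and $m_2=1$. The one difference is in the first step, where the paper appeals to continuity of the risk and uniqueness of the minimizing predictive distribution; you instead get $m_2(Q_n)\to1$ directly from $\mathcal R(P_Q)-\mathcal R(P^\dagger)\ge(1-m_2(Q))/6$ and rule out $a\neq0$ with the explicit value $-\frac16\log(1-a^2/4)$, which makes that uniqueness appeal concrete (as your portmanteau step does for the no-weak-limit claim).
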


\begin{proof}
By \eqref{eq:categorical-mixture-moments}, \(P_Q\) is determined by the tuple \((m_1(Q),m_2(Q))\).
Every subsequence of \(\{(m_1(Q_n),m_2(Q_n))\}\)
has a further subsequence converging in \([-1,1]\times[0,1]\).
Continuity of the predictive risk and uniqueness of
the minimizing predictive distribution \(P^\dagger\) force $m_1(Q_n)\to0$ and $m_2(Q_n)\to1$, which proves
\eqref{eq:categorical-predictive-TV}.
For fixed $\eta>0$,
\[
  1-m_2(Q_n)
  =
  \int(1-\param^2)\,dQ_n(\param)
  \ge
  \left(1-(1-\eta)^2\right)
  Q_n\{|\param|\le1-\eta\},
\]
which proves \eqref{eq:boundary-concentration}.
A weak probability limit on $(-1,1)$ would assign zero mass to every set $\{|\param|<1-\eta\}$ and hence zero mass to their union, which is all of $(-1,1)$; therefore no such limit exists.
On $[-1,1]$, every subsequential weak limit is supported on the two endpoints.
The condition $m_1(Q_n)\to0$ forces equal endpoint masses, so the only possible limit is \eqref{eq:compactified-weak-limit}.
\end{proof}

To specialize \Cref{prop:penalized-oracle} to the categorical example, it remains to control the finite-data fluctuation term $\Delta_n^\infty$.
Every coordinate of \(p_\param\) is at least \(1/8\), and this lower bound is preserved under mixing.
Hence \(0\le\ell_Q(x)\le\log 8\) for every mixing distribution \(Q\) and every \(x\in\mathsf X\), which gives
\begin{equation}
\label{eq:categorical-Delta-bound}
  \Delta_n^\infty
  \le
  (\log 8)
  \sum_{x=1}^3
  |P_n(x)-\datadist(x)|.
\end{equation}
The multinomial central limit theorem therefore implies \(\Delta_n^\infty=O_p(n^{-1/2})\).
The strong law of large numbers and \eqref{eq:categorical-Delta-bound} also ensure \(\Delta_n^\infty\to0\) almost surely.

If a computed solution has a primal optimal value error of \(g_n\), then it satisfies \eqref{eq:approximate-primal-optimality} with \(\xi_n=g_n\).
Combining the $\Delta_n^\infty$ bound with \Cref{prop:penalized-oracle,prop:sharp-accessibility} yields
\begin{equation*}
  \postPredRisk-\optPredRisk
  =
  O_p\!\left(
  n^{-1/2}
  +
  \varepsilon_n\log(1/\varepsilon_n)
  +
  g_n
  \right)
\label{eq:categorical-end-to-end-KL}
\end{equation*}
under KL regularization and
\begin{equation*}
  \postPredRisk-\optPredRisk
  =
  O_p\!\left(
  n^{-1/2}
  +
  \varepsilon_n^{1/2}
  +
  g_n
  \right)
\label{eq:categorical-end-to-end-Pearson}
\end{equation*}
under \(\chi^2\) regularization.
With the divergence-specific regularization schedules in \eqref{eq:divergence-specific-epsilon} and \(g_n=O_p(n^{-1/2})\), both excess-risk bounds are of order \(n^{-1/2}\).

\subsection{Gaussian mixture example}
\label{sec:numerics-dual-misspecification}

\newcommand{\DualNumericsK}{100}
\newcommand{\DualNumericsThetaMin}{-5}
\newcommand{\DualNumericsThetaMax}{5}
\newcommand{\DualNumericsSigmaModel}{0.3}
\newcommand{\DualNumericsReps}{100}
\newcommand{\DualNumericsRepresentativeN}{4096}
\newcommand{\DualNumericsMaxN}{4096}
\newcommand{\DualNumericsNs}{8, 16, 32, 64, 128, 256, 512, 1024, 2048, 4096}
\newcommand{\DualNumericsProjectionKL}{0.0201}
\newcommand{\DualNumericsMedianFinalExcessRisk}{\ensuremath{2.54\times 10^{-3}}}
\newcommand{\DualNumericsMaxKKTResidual}{\ensuremath{9.71\times 10^{-5}}}
\newcommand{\DualNumericsMedianKKTResidual}{\ensuremath{2.70\times 10^{-6}}}
\newcommand{\DualNumericsMaxDualityGap}{\ensuremath{4.03\times 10^{-11}}}
\newcommand{\DualNumericsMedianDualityGap}{\ensuremath{2.60\times 10^{-13}}}
\newcommand{\DualNumericsRepresentativeKKTResidual}{\ensuremath{9.22\times 10^{-6}}}
\newcommand{\DualNumericsRepresentativeDualityGap}{\ensuremath{4.90\times 10^{-13}}}
\newcommand{\DualNumericsCertifiedRate}{1.000}

The categorical example treats a setting in which the benchmark predictive risk is not attained.
We now consider a misspecified Gaussian location-mixture model and replace its continuous parameter space by a fixed grid for computation.
In the resulting finite-cardinality model, the benchmark predictive risk is attained.
This second example connects the predictive-convergence and duality results.
We first specialize the excess-risk bound in \Cref{prop:penalized-oracle} to the discretized model, obtaining a theoretical benchmark for the convergence of the predictive risk.
Then we solve the reduced KL dual problem from \Cref{cor:kl-gibbs-form}, recover the fitted mixture weights from the dual variables, and assess numerical accuracy using the optimality conditions in \Cref{cor:kl-gibbs-form} and the primal and dual optimal values.
Finally, we compare the observed decay of the excess predictive risk with the rate supplied by the specialized bound.

The example is not intended to compare primal and dual algorithms.
Rather, it illustrates the dimension reduction provided by the dual formulation: the primal problem is posed over mixing distributions, whereas the reduced dual problem has one coordinate per observation.

The data-generating distribution is the skewed, heavy-tailed mixture
\[
  \datadist
  =
  0.55\,t_5(-1.2,0.75)
  +
  0.30\,N(1.1,0.35^2)
  +
  0.15\,N(2.8,0.90^2).
\]
Here, \(t_5(\mu,s)\) is a Student-\(t\) distribution with five degrees of freedom, location \(\mu\), and scale \(s\).
We write \(N(\mu,\sigma^2)\) for a normal distribution with mean \(\mu\) and variance \(\sigma^2\).
Set
\[
  \Theta=[\DualNumericsThetaMin,\DualNumericsThetaMax]
  \quad
  \text{and}
  \quad
  P_\param = N(\param,\DualNumericsSigmaModel^2).
\]
The induced predictive distributions are Gaussian location mixtures with variances \(\DualNumericsSigmaModel^2\) and locations restricted to \(\Theta\).
This mixture class cannot reproduce the polynomial tails of the Student-\(t\) component of $\datadist$.
Thus, there is no mixing distribution $Q$ on $\Theta$ for which $P_Q=\datadist$.

For computation, we replace the continuum \(\Theta\) by
\(K=\DualNumericsK\) equally spaced locations \(\param_1,\ldots,\param_K\) and take the uniform grid prior
\(\priordist_j=1/K\).
Let
\[
  W_K
  \coloneqq
  \left\{
  w\in\mathbb R_+^K:
  \sum_{j=1}^K w_j=1
  \right\}.
\]
A mixing vector \(w\in W_K\) induces the predictive distribution \(P_w\) with density
\[
  p_w(x)
  =
  \sum_{j=1}^K
    w_j\frac{1}{\sigma}\varphi\left(\frac{x-\param_j}{\sigma}\right),
    \quad \sigma := \DualNumericsSigmaModel,
\]
where $\varphi$ is the standard normal density.
Let \(P_{w^\dagger}\) be a grid-restricted KL projection of \(\datadist\), namely, let
\[
  w^\dagger
  \in
  \argmin_{w\in W_K}
  \left\{- \datadist \log p_w \right\}.
\]
The \(\datadist\)-expectations in population quantities like this are evaluated by quadrature.
For the grid-restricted KL projection,
\[
  \KL(\datadist\|P_{w^\dagger})
  \approx
  \DualNumericsProjectionKL.
\]

\subsubsection{Specialization of Theorem~\ref{prop:penalized-oracle}}
For the discretized problem,
\[
  \Delta_n^\infty
  =
  \sup_{w\in W_K}
  \left|
  (P_n-\datadist)(-\log p_w)
  \right|
  =
  \sup_{w\in W_K}
  \left|
  (P_n-\datadist)\log p_w
  \right|.
\]
To control this term, we first obtain a common bound for the losses.
Since
\(|\param_j|\le 5\), each Gaussian component satisfies
\[
  \frac{1}{\sigma\sqrt{2\pi}}
  \exp\left\{
  -\frac{(|x|+5)^2}{2\sigma^2}
  \right\}
  \le
  p_{\param_j}(x)
  =
  \frac{1}{\sigma}
  \varphi\left(
  \frac{x-\param_j}{\sigma}
  \right)
  \le
  \frac{1}{\sigma\sqrt{2\pi}}.
\]
Since \(p_w\) is a convex combination of the component densities, the
same bounds hold for \(p_w\).
It follows that there exists \(C<\infty\) such that
\begin{equation}
  \label{eq:numerical-loss-envelope}
  \sup_{w\in W_K}|\log p_w(x)|
  \le
  C(1+x^2).
\end{equation}
The data-generating distribution \(\datadist\) has a finite fourth moment, so the right-hand side of \eqref{eq:numerical-loss-envelope} is square-integrable under \(\datadist\).
In addition to this square-integrability, we must control the complexity of the losses.
Every mixture density is a linear combination of the \(K\) fixed
component densities:
\[
  p_w
  =
  \sum_{j=1}^K w_jp_{\param_j}.
\]
The density class is therefore contained in a vector space of dimension at most \(K\) and is VC-subgraph by \cite[Lemma~2.6.16]{vanDerVaartWellner2023}.
Since \(s\mapsto-\log s\) is monotone, the loss class
\[
  \{-\log p_w:w\in W_K\}
\]
is also VC-subgraph by \cite[Lemma~2.6.20 ($viii$)]{vanDerVaartWellner2023}.
The covering bound for VC-subgraph classes
\cite[Theorem~2.6.7]{vanDerVaartWellner2023}, together with the
square-integrable bound above, verifies the conditions of
\cite[Theorem~2.5.2]{vanDerVaartWellner2023}. Consequently, for fixed \(K\),
\[
  \sqrt{n}\Delta_n^\infty = \sup_{w\in W_K}
  \left|
  \sqrt{n}(P_n-\datadist)\log p_w
  \right|
  =
  O_p(1).
\]
The population minimizer \(w^\dagger\) belongs to \(W_K\), and
the uniform grid prior satisfies
\[
  \KL(w^\dagger\|\priordist)
  \le
  \log K,
\]
so applying \Cref{prop:penalized-oracle} with \(w^\dagger\) gives
\begin{equation}
\label{eq:numerical-excess-risk-rate}
  \mathcal R(P_{\widehat w_n^{\mathrm{dual}}})
  -
  \mathcal R(P_{w^\dagger})
  =
  O_p\left(
  n^{-1/2}+\varepsilon_n+\xi_n
  \right),
\end{equation}
which motivates setting $\varepsilon_n = n^{-1/2}$.

\subsubsection{Numerical results}

For samples \(X_1,\ldots,X_n\overset{\text{iid}}{\sim} \datadist\), we compute the specialization of the reduced KL dual problem in \eqref{eq:kl-reduced-dual}.
We have
\[
  \widehat\alpha_n
  \in
  \argmax_{\alpha\in\mathbb R^n_{++}}
  \left[
  \frac1n\sum_{i=1}^n(1+\log\alpha_i)
  -
  \varepsilon_n
  \log\left\{
  \sum_{j=1}^K
  \Pi_j
  \exp\left(\frac{(\alpha,f(\param_j))_n}{\varepsilon_n}\right)
  \right\}
  \right],
  \quad
\varepsilon_n=n^{-1/2}.
\]
The corresponding mixture weights follow from \eqref{eq:kl-gibbs-density}:
\[
  \widehat w_{n,j}^{\mathrm{dual}}
  =
  \frac{
  \Pi_j\exp\{(\widehat\alpha_n, f(\param_j))_n/\varepsilon_n\}
  }{
  \sum_{\ell=1}^K
  \Pi_\ell\exp\{(\widehat\alpha_n, f(\param_\ell))_n/\varepsilon_n\}
  },
  \qquad j=1,\ldots,K.
\]
The reduced dual problem is solved numerically using L-BFGS-B \cite{ByrdLuNocedalZhu1995,MoralesNocedal2011,ZhuByrdLuNocedal1997}.

We use the sample sizes \(n\in\{\DualNumericsNs\}\) and repeat the experiment over \(\DualNumericsReps\) independent data sets.
For every fitted predictive distribution, the reported excess risk is
\[
  \mathcal R(P_{\widehat w_n^{\mathrm{dual}}})
  -
  \mathcal R(P_{w^\dagger}).
\]

\begin{figure}[!htbp]
\centering
\includegraphics[width=\textwidth]{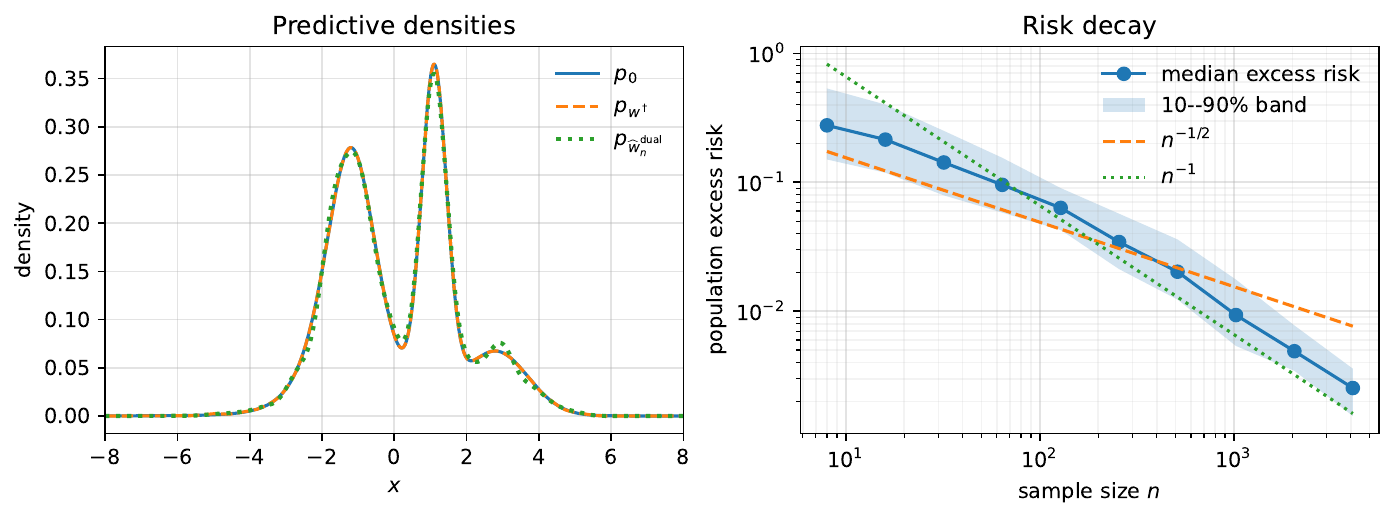}
\caption{Dual computation under misspecification.
Left: the true density \(\datadens\), its KL projection \(p_{w^\dagger}\) onto the \(K=\DualNumericsK\) component grid mixture class, and the fitted predictive density \(p_{\widehat w_n^{\mathrm{dual}}}\) obtained from one sample with \(n=\DualNumericsRepresentativeN\).
Right: excess risk over \(\DualNumericsReps\) independently generated data sets.
The solid curve is the pointwise median, and the shaded region
is the 10--90\% band; the reference lines are proportional to \(n^{-1/2}\) and \(n^{-1}\).
}
\label{fig:dual-k100-illustration}
\end{figure}

The left panel of \Cref{fig:dual-k100-illustration} illustrates two sources of error.
The difference between \(\datadens\) and \(p_{w^\dagger}\)
reflects approximation error by the chosen grid mixture class, whereas the
difference between \(p_{\widehat w_n^{\mathrm{dual}}}\) and
\(p_{w^\dagger}\) reflects finite-sample estimation error.
The fitted predictive density \(p_{\widehat w_n^{\mathrm{dual}}}\) is close to the projected predictive density \(p_{w^\dagger}\) over the central part of the distribution.
The right panel compares the decay of the population excess risk with two reference rates.
The \(n^{-1/2}\) line is the conservative rate supplied by \eqref{eq:numerical-excess-risk-rate} for \(\varepsilon_n=n^{-1/2}\) and when the numerical optimization error is negligible.
The \(n^{-1}\) line is included as a visual benchmark; it has not been established theoretically.
Across the displayed runs, the largest absolute deviation from the \(\alpha\) fixed-point condition in \eqref{eq:kl-alpha-fixed-point} is \(\DualNumericsMaxKKTResidual\), and the largest absolute difference between the primal and dual optimal values is \(\DualNumericsMaxDualityGap\).
These small optimization residuals indicate that numerical error is unlikely to materially affect the patterns shown in the figure.

\section{Conclusion}\label{sec:discussion-conclusion}

This paper analyzes predictively oriented (PrO) inference under the logarithmic score and a broad class of $\phi$-divergence penalties.
This specializes the general scoring-rule framework of \cite{McLatchieCheriefAbdellatifFrazierKnoblauch2025} while broadening the regularization beyond the KL divergence.
The PrO problem is posed over probability distributions on the parameter space and may therefore be infinite dimensional.
The dual problem, by contrast, has only $n+1$ scalar variables: one variable for each observation and one normalization multiplier.

Under stability conditions derived in \Cref{sec:finite-sample-duality}, the primal and dual optimal values agree and the dual optimum is attained.
When a primal solution also exists, the optimality conditions recover its density relative to the prior from the dual variables and provide fixed-point and primal--dual certificates for assessing the accuracy of numerical solutions.
For KL regularization, the normalization multiplier can be eliminated analytically, and the recovered density has an explicit exponential form.

The analysis in \Cref{sec:predictive-convergence-rates} accommodates approximate PrO posteriors and does not require the benchmark predictive risk to be attained.
The resulting excess-risk bound isolates the effects of sampling, approximation under a finite divergence budget, regularization, and numerical error.
This decomposition yields conditions for predictive-risk consistency and provides convergence rates when the individual terms can be controlled.

Taken together, the two examples in \Cref{sec:examples} show that the theory applies in regimes with and without attainment of the benchmark predictive risk.
The categorical example demonstrates that predictive distributions may converge to a unique limit even when the underlying mixing distributions have no weak limit on the original parameter space.
It also shows that the choice of divergence can materially affect the approximation--regularization tradeoff and the regularization schedule.
The misspecified Gaussian location-mixture example demonstrates how the
predictive-risk analysis and the dual PrO problem can be used together.

Several directions remain open within the framework developed here.
First, obtaining sharper rates from the excess-risk bound requires model-specific control of the empirical-process term and the approximation profile.
Second, although conditions relating predictive convergence to convergence of the mixing distributions have been studied for PrO posteriors \cite{McLatchieCheriefAbdellatifFrazierKnoblauch2025}, it would be useful to understand how such conditions interact with nonattainment and with sequences whose divergences grow without bound.
Third, since the dual formulation replaces an infinite-dimensional primal problem with an optimization problem whose dimension grows linearly with the sample size, developing algorithms that exploit the structure of this formulation at scale is an important computational direction.
Finally, extending duality and predictive-risk analyses under nonattainment from the logarithmic score to other proper scoring rules is a natural direction for future work.

\begin{appendix}

\section{Proof of Proposition~\ref{prop:sharp-accessibility}}
\label{app:sharp-accessibility}

\begin{proof}
The mixing distributions \(Q_\delta\) establish the upper bounds in
\eqref{eq:elementary-accessibility-bounds} for both divergences.

We first sharpen the KL bound by solving the corresponding constrained
problem. For a mixing distribution \(Q\) on \((-1,1)\), let \(Q^{-}\) denote
its image under \(\param\mapsto-\param\), and set
\[
\overline Q
=
\frac12(Q+Q^{-}).
\]
Since the prior is symmetric,
\(
D_{\mathrm{KL}}(Q^{-}\|\priordist)
=
D_{\mathrm{KL}}(Q\|\priordist)
\),
and convexity of KL divergence in its first argument gives
\(
D_{\mathrm{KL}}(\overline Q\|\priordist)
\le
D_{\mathrm{KL}}(Q\|\priordist)
\).
Reflection interchanges the first two coordinates of \(P_Q\). Since the first
two coordinates of \(\datadist\) are equal,
\(
\mathcal R(P_{Q^{-}})=\mathcal R(P_Q)
\).
Moreover,
\(
P_{\overline Q}
=
\frac12(P_Q+P_{Q^{-}})
\),
so convexity of predictive risk gives
\[
\mathcal R(P_{\overline Q})
\le
\frac12\mathcal R(P_Q)
+
\frac12\mathcal R(P_{Q^{-}})
=
\mathcal R(P_Q).
\]
It is therefore enough to consider symmetric mixing distributions.
For \(m\in[0,1]\), define
\[
r(m)
\coloneqq
-\frac13\log\frac{3-m}{8}
-\frac23\log\frac{1+m}{4}.
\]
If \(Q\) is symmetric, then \(m_1(Q)=0\) and
\(
\mathcal R(P_Q)=r(m_2(Q))
\).
Furthermore,
\[
r'(m)
=
\frac{3m-5}{3(3-m)(1+m)}
<0,
\qquad
0\le m\le1.
\]
Thus, among symmetric mixing distributions, minimizing predictive risk is
equivalent to maximizing the second moment.
For \(\lambda>0\), define
\begin{equation*}
\label{eq:Qlambda-definition}
Z_\lambda
\coloneqq
\int_{-1}^1
\exp(\lambda u^2)\,d\priordist(u)
\quad
\text{and}
\quad
\frac{dQ_\lambda}{d\priordist}(\param)
=
\frac{\exp(\lambda\param^2)}{Z_\lambda},
\end{equation*}
and write, within this proof,
\[
m_\lambda
\coloneqq
\int\param^2\,dQ_\lambda(\param)
\quad
\text{and}
\quad
\rho_\lambda
\coloneqq
D_{\mathrm{KL}}(Q_\lambda\|\priordist).
\]
For any \(Q\ll\priordist\),
\begin{equation}
\label{eq:dkl_inequal_cat}
0
\le
D_{\mathrm{KL}}(Q\|Q_\lambda)
=
D_{\mathrm{KL}}(Q\|\priordist)
-\lambda\int\param^2\,dQ(\param)
+\log Z_\lambda.
\end{equation}
For \(Q=Q_\lambda\),
\(
\rho_\lambda
=
\lambda m_\lambda-\log Z_\lambda
\).
Combining this identity with \eqref{eq:dkl_inequal_cat} gives
\begin{equation*}
\label{eq:moment-KL-comparison}
\lambda
\left\{
\int\param^2\,dQ(\param)-m_\lambda
\right\}
\le
D_{\mathrm{KL}}(Q\|\priordist)-\rho_\lambda.
\end{equation*}
Hence, every mixing distribution \(Q\) satisfying
\(D_{\mathrm{KL}}(Q\|\priordist)\le\rho_\lambda\) has
\(m_2(Q)\le m_\lambda\).
To see that \(Q_\lambda\) solves the predictive-risk problem at radius
\(\rho_\lambda\), let \(Q\) be any such mixing distribution. Its
symmetrization \(\overline Q\) is also feasible, satisfies
\(m_2(\overline Q)=m_2(Q)\), and has no larger predictive risk. Since \(r\) is
strictly decreasing,
\[
\mathcal R(P_Q)
\ge
\mathcal R(P_{\overline Q})
=
r(m_2(Q))
\ge
r(m_\lambda)
=
\mathcal R(P_{Q_\lambda}).
\]
Therefore, \(Q_\lambda\) minimizes predictive risk over all mixing
distributions with KL divergence at most \(\rho_\lambda\).
Since
\(
m_\lambda=(\log Z_\lambda)'
\),
differentiation gives
\[
\frac{d\rho_\lambda}{d\lambda}
=
\lambda\operatorname{Var}_{Q_\lambda}(\param^2)
>0.
\]
Thus, \(\lambda\mapsto\rho_\lambda\) is continuous and strictly increasing.
Since \(\priordist\) is uniform on \((-1,1)\),
\[
Z_\lambda
=
\int_0^1 e^{\lambda u^2}\,du.
\]
An expansion at the endpoint \(u=1\) gives
\[
Z_\lambda
=
\frac{e^\lambda}{2\lambda}
\left\{1+O(\lambda^{-1})\right\},
\qquad
m_\lambda
=
1-\lambda^{-1}+O(\lambda^{-2}),
\]
and therefore
\begin{equation}
\label{eq:rho-lambda-asymptotic}
\rho_\lambda
=
\log(2\lambda)-1+O(\lambda^{-1}).
\end{equation}
In particular, \(\rho_\lambda\downarrow0\) as \(\lambda\downarrow0\) and
\(\rho_\lambda\uparrow\infty\) as \(\lambda\uparrow\infty\), so every positive
radius corresponds to a unique value of \(\lambda\).
Since \(r(1)=\mathcal R(P^\dagger)\) and \(r'(1)=-1/6\), a Taylor expansion of
\(r\) at \(m=1\) gives
\begin{equation}
\label{eq:risk-lambda-asymptotic}
\mathcal R(P_{Q_\lambda})-\mathcal R(P^\dagger)
=
r(m_\lambda)-r(1)
=
\frac{1}{6\lambda}+O(\lambda^{-2}).
\end{equation}
Since \(Q_\lambda\) solves the constrained problem at radius
\(\rho_\lambda\),
\[
\accProfKL(\rho_\lambda)
=
\mathcal R(P_{Q_\lambda})-\mathcal R(P^\dagger).
\]
Combining \eqref{eq:rho-lambda-asymptotic} and
\eqref{eq:risk-lambda-asymptotic} yields
\[
e^{\rho_\lambda}\accProfKL(\rho_\lambda)
=
\frac{1}{3e}+o(1),
\]
which proves \eqref{eq:sharp-KL-accessibility}.

We next prove the lower bound for \(\chi^2\) divergence. Let
\(Q\ll\priordist\) satisfy
\(D_{\chi^2}(Q\|\priordist)\le\rho\). Then
\[
\int
\left(
\frac{dQ}{d\priordist}
\right)^2
d\priordist
\le
1+\rho.
\]
For \(0<\delta<1\), let
\(
B_\delta=\{\param:|\param|>1-\delta\}
\).
Since \(\priordist(B_\delta)=\delta\), the Cauchy--Schwarz inequality gives
\[
Q(B_\delta)
\le
\sqrt{(1+\rho)\delta}.
\]
Choose
\(
\delta=\{4(1+\rho)\}^{-1}
\).
Then \(Q(B_\delta)\le1/2\), so at least one half of the mass of \(Q\) lies
where \(|\param|\le1-\delta\). Consequently,
\[
1-m_2(Q)
=
\int(1-\param^2)\,dQ(\param)
\ge
\frac{1}{8(1+\rho)}.
\]
The preceding symmetrization argument also applies to the \(\chi^2\) divergence: reflection preserves the divergence, and convexity in its first argument implies
\(
D_{\chi^2}(\overline Q\|\priordist)
\le
D_{\chi^2}(Q\|\priordist)
\).
Moreover, \(m_2(\overline Q)=m_2(Q)\) and
\(\mathcal R(P_{\overline Q})\le\mathcal R(P_Q)\). Since
\(
-r'(m)\ge1/6
\)
for \(0\le m\le1\), the mean value theorem gives
\[
\mathcal R(P_Q)-\mathcal R(P^\dagger)
\ge
r(m_2(Q))-r(1)
\ge
\frac16\{1-m_2(Q)\}
\ge
\frac{1}{48(1+\rho)}.
\]
Together with the upper bound in
\eqref{eq:elementary-accessibility-bounds}, this proves
\eqref{eq:sharp-pearson-accessibility}.
\end{proof}

\end{appendix}

\section*{Acknowledgements}

This work is funded by the Sandia Laboratory Research and Development Program. 
We acknowledge use of GPT 5.5 and 5.6 to assist in drafting the manuscript text, structuring the mathematical derivations, and generating the simulation code.

This article has been authored by an employee of National Technology \& Engineering Solutions of Sandia, LLC under Contract No. DE-NA0003525 with the U.S. Department of Energy (DOE). The employee owns all right, title and interest in and to the article and is solely responsible for its contents. The United States Government retains and the publisher, by accepting the article for publication, acknowledges that the United States Government retains a non-exclusive, paid-up, irrevocable, world-wide license to publish or reproduce the published form of this article or allow others to do so, for United States Government purposes. The DOE will provide public access to these results of federally sponsored research in accordance with the DOE Public Access Plan
\smallskip
\begin{center}
    \verb|https://www.energy.gov/downloads/doe-public-access-plan|
\end{center}

\bibliographystyle{imsart-number}
\bibliography{references}

\end{document}